\documentclass[10pt]{article}
\usepackage[preprint]{tmlr}

\usepackage{amsmath,amssymb,amsfonts,amsthm,mathtools,bm}
\usepackage{booktabs}
\usepackage{enumitem}
\usepackage{microtype}
\usepackage{graphicx}
\usepackage{xcolor}
\usepackage{subcaption}
\usepackage{placeins}
\usepackage{url}
\usepackage{hyperref}
\usepackage[nameinlink,noabbrev]{cleveref}
\allowdisplaybreaks

\hypersetup{colorlinks=true,linkcolor=blue,citecolor=blue,urlcolor=blue}

\newenvironment{keywords}{\par\smallskip\noindent\textbf{Key words.}\ }{\par\smallskip}
\newenvironment{AMS}{\par\smallskip\noindent\textbf{MSC codes.}\ }{\par\smallskip}

\theoremstyle{plain}
\newtheorem{theorem}{Theorem}[section]
\newtheorem{lemma}[theorem]{Lemma}
\newtheorem{proposition}[theorem]{Proposition}
\newtheorem{corollary}[theorem]{Corollary}

\theoremstyle{definition}
\newtheorem{definition}[theorem]{Definition}
\newtheorem{algorithm}[theorem]{Algorithm}
\newtheorem{remark}[theorem]{Remark}

\newcommand{\R}{\mathbb R}

\newcommand{\HH}{\ensuremath{\mathbb{H}}}
\newcommand{\B}{\mathbb B}
\newcommand{\calB}{\mathcal B}

\newcommand{\calK}{\mathcal K}
\newcommand{\calM}{\mathcal M}
\newcommand{\dist}{\operatorname{dist}}
\newcommand{\vol}{\operatorname{vol}}
\newcommand{\argmin}{\operatorname*{argmin}}
\newcommand{\ip}[2]{\left\langle #1,#2\right\rangle}
\newcommand{\ipL}[2]{\left\langle #1,#2\right\rangle_{\rm L}}
\newcommand{\norm}[1]{\left\lVert #1\right\rVert}
\newcommand{\abs}[1]{\left\lvert #1\right\rvert}
\newcommand{\eps}{\varepsilon}
\newcommand{\kappaC}{\kappa}
\newcommand{\Id}{I}

\newcommand{\gbar}{\bar g}
\newcommand{\arcosh}{\operatorname{arcosh}}
\newcommand{\artanh}{\operatorname{artanh}}
\newcommand{\sech}{\operatorname{sech}}

\numberwithin{equation}{section}

\title{One-Shot Klein Cutting Planes for Lipschitz Geodesically Convex Optimization in Hyperbolic Space}

\author{
\name Yutong Zhang \email  \\
\addr School of Mathematics, Sichuan University, Chengdu 610064, China
\AND
\name Yaoran Yang \email  \\
\addr School of Mathematics, Sichuan University, Chengdu 610064, China
\AND
\name Yifan Zhu \email  \\
\addr College of Computer Science, Sichuan University, Chengdu 610064, China
\AND
\name Wentao Zhang \email zhang-wt24@mails.tsinghua.edu.cn \\
\addr Tsinghua Shenzhen International Graduate School, Tsinghua University, Shenzhen 518055, China\\
Corresponding author
}

\def\month{MM}
\def\year{YYYY}
\def\openreview{\url{https://openreview.net/forum?id=XXXX}}

\begin{document}
\maketitle

\begin{abstract}
Motivated by the COLT 2023 open problem of Criscitiello, Mart\'inez-Rubio, and Boumal on deterministic first-order methods for Lipschitz geodesically convex optimization on Hadamard manifolds, we study hyperbolic space
\[
\HH^d_{-\kappaC^2}
=\{X\in\R^{d+1}:\ipL{X}{X}=-1,\ X_0>0\},
\qquad
\ip{U}{V}_X=\kappaC^{-2}\ipL{U}{V}.
\]
For every geodesically convex $M$-Lipschitz function
\[
f:\bar B_{\HH}(x_0,r)\to\R,\qquad s=\kappaC r,
\]
we give a one-shot Klein cutting-plane method returning a queried point $\hat x$ such that
\[
f(\hat x)-\min_{\bar B_{\HH}(x_0,r)}f\le \eps Mr
\]
after at most
\[
\left\lceil
2d(d+1)\log\!\left(\frac{16\sinh s\cosh s}{s\eps}\right)
\right\rceil
\]
oracle calls. For $d\ge2$, each localization step costs $O(d^2)$ arithmetic operations; for $d=1$, an interval variant gives the same oracle bound. Hence
\[
N=O\bigl(d^2(s+\log(e/\eps))\bigr)
=O\bigl(d^2\zeta_s\log(e/\eps)\bigr),
\qquad
\zeta_s=s/\tanh s .
\]
Compared with the constant-curvature construction associated with the COLT problem, this replaces chained curvature--accuracy dependence by additive dependence. The proof does not rely on convexity of the Klein pullback, which is generally only quasiconvex. Instead, every Riemannian subgradient halfspace becomes an exact Euclidean central cut: for $\theta=\kappaC\dist(X,Y)$,
\[
\ip{g}{\log_XY}_X
=\frac{\theta}{\kappaC^2\sinh\theta}\ipL{g}{Y},
\]
and tangency at $X$ converts $\ipL{g}{Y}\le0$ into
\[
\gbar^{\mathsf T}(u-c)\le0,\qquad u=\Phi(Y),\ c=\Phi(X).
\]
Thus one fixed Euclidean ellipsoid localizes the hyperbolic ball, and curvature enters only through
\[
\log\!\left(\frac{\sinh s\cosh s}{s\eps}\right)
=\log(1/\eps)+2s-\log(4s)+O(e^{-4s}).
\]
The general Hadamard-manifold problem remains open.
\end{abstract}

\begin{keywords}
geodesically convex optimization, hyperbolic space, ellipsoid method, cutting planes, Beltrami--Klein model, first-order oracle, Riemannian subgradient
\end{keywords}

\begin{AMS}
90C25, 90C48, 65K05, 53B20, 52A41
\end{AMS}

\section{Motivation from the COLT open problem and the hyperbolic setting}

The Euclidean ellipsoid method separates optimization into two statements:
\[
        F(v)\ge F(u)+p^{\mathsf T}(v-u),
        \qquad
        \{v:F(v)\le F(u)\}\subseteq\{v:p^{\mathsf T}(v-u)\le0\},
\]
and
\[
        \vol(E^+)\le \exp\!\left(-\frac1{2(d+1)}\right)\vol(E),
        \qquad
        E\cap\{p^{\mathsf T}(v-c_E)\le0\}\subseteq E^+ .
\]
For the Euclidean problem
\[
        \min_{u\in\bar B(u_0,R)}F(u),
        \qquad
        F\text{ convex and }L\text{-Lipschitz},
\]
these two facts imply
\[
        N=O\!\left(d^2\log\frac{LR}{\eta}\right),
        \qquad
        F(\hat u)-F^*\le\eta .
\]

On a Riemannian manifold $(\calM,g)$ the first-order inequality is
\[
        f(y)\ge f(x)+\ip{\xi}{\log_x(y)}_x,
        \qquad
        \xi\in\partial f(x),
\]
so the canonical localization cut is
\[
        H(x,\xi)=\{y\in\calM:\ip{\xi}{\log_x(y)}_x\le0\}.
\]
The COLT open problem of Criscitiello, Mart\'inez-Rubio, and Boumal asks for deterministic first-order methods for Lipschitz geodesically convex optimization in the general Hadamard-manifold setting, with query complexity polynomial in the relevant dimension and curvature parameters and logarithmic in the target accuracy.  In the hyperbolic model-space benchmark, the corresponding parameters are
\[
        s=\kappaC r,
        \qquad
        \zeta_s=\frac{s}{\tanh s},
\]
and the desired accuracy scale is
\[
        f(\hat x)-f^*\le \eps Mr.
\]
The obstruction in the general Hadamard setting is that the family
\[
        H(x,\xi)=\{y\in\calM:\ip{\xi}{\log_x(y)}_x\le0\}
\]
is not usually an affine family in one fixed chart; a moving tangent-space localization scheme must then control both transport and volume distortion.  The constant-curvature construction described in \cite{CMB2023Open} uses local model-space subproblems and incurs an additional accuracy logarithm.  Our goal here is narrower: we sharpen this constant negative-curvature construction by exploiting the global projective structure of the Beltrami--Klein model.

The present paper proves the following theorem for hyperbolic space.  The statement is deliberately separated into the sharp additive bound and a coarser product-style bound comparable to the parameterization used in the COLT open-problem discussion, because
\[
        s+\log(1/\eps)
        \le (1+s)\log(e/\eps)
        \le 2\zeta_s\log(e/\eps),
        \qquad
        0<\eps<1,
\]
where the second inequality follows from
\[
        1+s\le 2\zeta_s,
        \qquad
        \zeta_s=\frac{s}{\tanh s},
        \qquad
        \lim_{s\downarrow0}\zeta_s=1.
\]

\begin{theorem}[global Klein cutting-plane theorem]\label{thm:main}
Let
\[
        d\ge1,
        \qquad
        \kappaC>0,
        \qquad
        r>0,
        \qquad
        s=\kappaC r,
        \qquad
        \eps\in(0,1).
\]
Let
\[
        f:\bar B_{\HH}(x_0,r)\to\R
\]
be geodesically convex and $M$-Lipschitz with respect to hyperbolic distance on the closed ball.  Suppose a first-order oracle returns
\[
        (f(x),g),
        \qquad
        g\in\partial f(x),
\]
at every queried point $x\in\bar B_{\HH}(x_0,r)$.  Then \Cref{alg:gkcp} returns a queried point $\hat x$ such that
\[
        f(\hat x)-\min_{x\in\bar B_{\HH}(x_0,r)}f(x)
        \le \eps Mr
\]
after no more than
\[
        N(d,s,\eps)=
        \left\lceil
        2d(d+1)
        \log\!\left(\frac{16\sinh s\cosh s}{s\eps}\right)
        \right\rceil
\]
first-order oracle calls.  For $d\ge2$ each localization update uses
\[
        Q a,
        \qquad
        a^{\mathsf T}Qa,
        \qquad
        bb^{\mathsf T},
\]
and therefore costs $O(d^2)$ arithmetic operations apart from the oracle and standard hyperboloid--Klein coordinate evaluations.  For $d=1$ the localization body is an interval and each update costs $O(1)$.  Moreover
\[
        N(d,s,\eps)
        \le
        \left\lceil
        2d(d+1)\left(2s+\log\frac{16}{\eps}\right)
        \right\rceil,
\]
so the sharp additive form and the coarser product-style form are
\[
        N(d,s,\eps)=O\bigl(d^2(s+\log(1/\eps))\bigr),
        \qquad
        N(d,s,\eps)=O\bigl(d^2\zeta_s\log(e/\eps)\bigr).
\]
\end{theorem}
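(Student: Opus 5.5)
We first move the problem into a Euclidean chart and then run a standard ellipsoid argument there. By an isometry of $\HH^d_{-\kappaC^2}$ (a Lorentz boost) we place $x_0$ at $e_0=(1,0,\dots,0)$. The Beltrami--Klein map is $\Phi(X)=(X_1,\dots,X_d)/X_0$. A point at hyperbolic distance $\rho$ from $e_0$ has $X_0=\cosh(\kappaC\rho)$ and spatial part of Euclidean norm $\sinh(\kappaC\rho)$. Hence
\[
\Phi(\bar B_{\HH}(x_0,r))=\bar B(0,\tanh s),
\]
a Euclidean ball, and this is the initial ellipsoid $E_0$.

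\textbf{Exact central cuts.} The key identity is the one in the abstract. On the hyperboloid,
\[
\log_XY=\frac{\theta}{\sinh\theta}\bigl(Y-\cosh\theta\,X\bigr),\qquad \cosh\theta=-\ipL{X}{Y}.
\]
Since $g\in T_X\HH$ gives $\ipL{g}{X}=0$, we get $\ip{g}{\log_XY}_X=\frac{\theta}{\kappaC^2\sinh\theta}\ipL{g}{Y}$. Divide $\ipL{g}{Y}$ by $Y_0>0$ and write $u=\Phi(Y)$, $c=\Phi(X)$. Using $\ipL{g}{X}=0$ to eliminate $g_0$ turns it into $\gbar^{\mathsf T}(u-c)$, where $\gbar=(g_1,\dots,g_d)$ up to a positive factor. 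Therefore the halfspace $\{y:\ip{g}{\log_xy}_x\le0\}$, which contains the sublevel set $\{f\le f(x)\}$, is exactly the Euclidean halfspace $\{u:\gbar^{\mathsf T}(u-c)\le0\}$ through $c=\Phi(x)$. Convexity of $f\circ\Phi^{-1}$ is never needed.

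\textbf{Algorithm and volume bound.} At step $k$ we query $x_k=\Phi^{-1}(c_k)$, where $c_k$ is the center of $E_k$. The center may leave $\bar B(0,\tanh s)$. In that case we cut instead with the Euclidean separating hyperplane of the ball, $c_k^{\mathsf T}(u-c_k)\le0$, spending no oracle call or one; we count it as one step. In the $d=1$ case these cuts are interval bisections. We then apply the standard Löwner-ellipsoid update, whose formulas involve $Qa$, $a^{\mathsf T}Qa$ and $bb^{\mathsf T}$, hence cost $O(d^2)$. Each step shrinks volume by $e^{-1/(2(d+1))}$, so after $N$ steps $\vol(E_N)\le e^{-N/(2(d+1))}\vol(E_0)$. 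Throughout, the set $\Phi(\{x\in\bar B: f(x)\le \min_k f(x_k)\})$ stays inside $E_N$.

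\textbf{Main obstacle: volume--accuracy comparison.} Let $x^*$ be a minimizer and $\lambda=\eps/2$-ish. Consider the "geodesic homothety" $K_\lambda=\{\gamma_{x^*\to y}(\lambda):y\in\bar B\}$. By geodesic convexity and the Lipschitz bound, $f\le f^*+\lambda\cdot 2Mr$ on $K_\lambda$, so choosing $\lambda=\eps/2$ makes $f\le f^*+\eps Mr$ there. If every queried point had value greater than $f^*+\eps Mr$, then $\Phi(K_\lambda)\subseteq E_N$ would hold (the ball cuts also preserve it).

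It contains the hyperbolic ball $B_{\HH}(z,\lambda r)$ with $z$ on the segment from $x^*$ toward $x_0$, because geodesics emanating from $x^*$ spread at least linearly in nonpositive curvature. I therefore need a lower bound on the Euclidean volume of $\Phi$ of a hyperbolic ball of radius $\lambda r$ centered at Klein distance $\le\tanh s$. This is the step I expect to be hardest. The Klein metric's Jacobian at depth $s$ compresses radial directions by $\cosh^{-2}$ and tangential ones by $\cosh^{-1}$. The cleanest route is to show $\Phi(B_{\HH}(z,\rho))$ contains a Euclidean ellipsoid with semi-axes $\gtrsim \kappaC\rho/\cosh^2 s$ radially and $\kappaC\rho/\cosh s$ tangentially. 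One would get this via the Lorentz boost moving $z$ to the origin, which acts projectively on the Klein ball.

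The volume ratio against $\vol(B(0,\tanh s))$ should then be at least
\[
\Bigl(\frac{s\eps}{16\sinh s\cosh s}\Bigr)^{d}
\]
up to careful constants, since $\tanh s\cdot\cosh^2 s=\sinh s\cosh s$. That yields the contradiction once $N/(2(d+1))>d\log(16\sinh s\cosh s/(s\eps))$, giving $N(d,s,\eps)$.

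\textbf{Closing estimates.} The coarse bound follows from $\sinh s\cosh s/s\le e^{2s}$. The $\zeta_s$ form follows from the inequality displayed before the theorem.
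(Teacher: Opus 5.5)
There is a genuine gap, and it sits exactly at the step you flag as hardest: the claim that $K_\lambda$ contains a hyperbolic ball $B_{\HH}(z,\lambda r)$. Your justification actually proves the opposite inclusion. In nonpositive curvature, $t\mapsto\dist(\gamma_1(t),\gamma_2(t))$ is convex for geodesics issuing from $x^*$. So the geodesic homothety $h_\lambda:y\mapsto\gamma_{x^*\to y}(\lambda)$ is $\lambda$-Lipschitz, which gives $K_\lambda=h_\lambda(\bar B_{\HH}(x_0,r))\subseteq\bar B_{\HH}(h_\lambda(x_0),\lambda r)$. Hence a ball of radius $\lambda r$ inside $K_\lambda$ would have to be centered at $h_\lambda(x_0)$, and that ball is not inside.

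To see this, take $\kappaC=1$ and put $x^*$ on the boundary sphere. Seen from $x^*$, at angle $\phi$ from the direction of $x_0$:
\begin{itemize}
\item $K_\lambda$ extends to distance $2\lambda\artanh(\tanh s\cos\phi)$;
\item $\bar B_{\HH}(h_\lambda(x_0),\lambda s)$ extends to distance $2\artanh(\tanh(\lambda s)\cos\phi)$.
\end{itemize}
As $\phi\to\pi/2$, the ratio of the first to the second tends to $\lambda\tanh s/\tanh(\lambda s)<1$, by strict concavity of $\tanh$. For $s=10$ and $\lambda=0.05$ this ratio is about $0.11$. For such $x^*$, $K_\lambda$ is a thin spike away from the direction of $x_0$, and its inradius stays $O(\lambda/\kappaC)$ as $s\to\infty$. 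So the claimed inscribed radius $\lambda r$ is off by a factor of order $s$, not by a constant. Moreover, the subsequent Klein-ellipsoid estimate and the comparison with $(s\eps/(16\sinh s\cosh s))^d$ are only asserted ``up to careful constants''. As a result, the explicit count $N(d,s,\eps)$ is never derived.

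The paper avoids geodesic homotheties altogether; geodesic convexity is used only for cut validity, and the volume step needs only the Lipschitz property.
\begin{itemize}
\item From the Klein metric tensor, $\norm{DX(u)}=1/(\kappaC(1-\norm{u}^2))$.
\item Integrating along chords of the convex set $\bar B(0,\tanh s)$ shows that $F=f\circ\Phi^{-1}$ is Euclidean $L_s$-Lipschitz there, with $L_s=M\cosh^2 s/\kappaC$.
\item Set $\eta=\eps Mr$ and $\rho=\eta/(8L_s)$. Center a Euclidean ball of radius $\rho$ at $u^*$, or at $(1-2\rho/\tanh s)u^*$ if $u^*$ is within $\rho$ of the boundary. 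It lies in $\bar B(0,\tanh s)\cap\{F\le F^*+\eta/2\}$.
\item Therefore, if every feasible query has gap larger than $\eta$, this ball survives every cut.
\item Its relative volume is $(\rho/\tanh s)^d=(\eps s/(8\sinh s\cosh s))^d$. The factor $16$ in $N(d,s,\eps)$ then makes the volume contradiction strict.
\end{itemize}
Your exact-cut identity, feasibility cuts, $d=1$ bisection remark, and closing estimates are fine. Replacing the homothety and inscribed-ball step with this Euclidean-Lipschitz ball closes the proof.
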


\begin{remark}[scope and relation to the general Hadamard problem]\label{rem:scope}
The theorem is a hyperbolic-space result: it covers arbitrary dimension, arbitrary radius, arbitrary Lipschitz geodesically convex objectives on a hyperbolic ball, deterministic first-order access, and the accuracy scale $\eps Mr$.  It should not be read as a solution of the general Hadamard-manifold open problem.  In particular, it does not give a one-chart localization method for arbitrary Hadamard manifolds, nor for arbitrary manifolds with curvature in $[-K,K]$.  The proof uses two exact identities,
\[
        \Phi(\text{hyperbolic geodesic})=\text{Euclidean chord},
        \qquad
        \ip{g}{\log_XY}_{X}\le0
        \Longleftrightarrow
        a_g^{\mathsf T}(\Phi(Y)-\Phi(X))\le0,
\]
which are projective constant-curvature identities and are unavailable in general Hadamard geometry.
\end{remark}
\section{Related work}\label{sec:related-work}

The present work is motivated most directly by the COLT open problem of
Criscitiello, Mart\'inez-Rubio, and Boumal \cite{CMB2023Open}.  The general
open problem concerns deterministic first-order methods for Lipschitz
geodesically convex optimization on Hadamard manifolds, with polynomial
dependence on the relevant dimension and curvature parameters and logarithmic
dependence on the target accuracy.  In the hyperbolic model-space setting, the
natural scale is
\[
        s=\kappaC r,
        \qquad
        \zeta_s=\frac{s}{\tanh s},
\]
and the requested accuracy is of order $\eps Mr$ on a ball of radius $r$.
The constant-curvature construction described in that work proceeds through
local charts of controlled radius.  This gives the desired qualitative
dependence on curvature, but incurs an additional accuracy logarithm because
progress is chained over many local phases.  Our contribution is to remove this
chaining in the negative constant-curvature setting: the Beltrami--Klein chart
converts every Riemannian subgradient cut into a single Euclidean central cut,
so one global ellipsoid localization suffices.

The complexity landscape for geodesically convex optimization is constrained by
lower bounds depending on curvature.  Criscitiello and Boumal
\cite{CriscitielloBoumal2023Lower} show that curvature can genuinely increase
the oracle complexity of geodesically convex optimization, so a global
hyperbolic method must pay a curvature-dependent price.  The bound proved here
is consistent with this viewpoint: the dependence on $s$ is not eliminated, but
it appears additively in
\[
        \log\left(\frac{\sinh s\cosh s}{s\eps}\right)
        =
        \log(1/\eps)+2s-\log(4s)+O(e^{-4s}),
\]
rather than as a product of a curvature term and an accuracy logarithm.  Thus
the result should be read not as curvature-free optimization, but as a sharp
use of the projective structure of hyperbolic space to avoid unnecessary local
restarts.

Another closely related direction is Riemannian acceleration in model spaces.
Mart\'inez-Rubio \cite{MartinezRubio2022} studies global acceleration in
hyperbolic and spherical spaces, showing that constant-curvature geometry can
support algorithmic behavior stronger than what is available on arbitrary
manifolds.  The present paper is different in both oracle model and algorithmic
mechanism.  It addresses nonsmooth Lipschitz geodesically convex objectives
through first-order cutting planes, rather than smooth accelerated methods.
Nevertheless, both lines of work use the special structure of constant
curvature to obtain global algorithms whose guarantees would not follow from
generic local-coordinate arguments alone.

On the Euclidean side, our proof follows the classical separation between
first-order convexity and volume reduction that underlies the ellipsoid method.
The standard convex-optimization framework and the role of Lipschitz constants,
subgradients, and accuracy scaling are treated systematically by Nesterov
\cite{Nesterov2004}.  In Euclidean space, a subgradient inequality gives an
affine halfspace containing the lower sublevel set, and the ellipsoid method
then localizes a minimizer by repeated central cuts.  Our hyperbolic method
preserves exactly this logic, but not by proving that the Klein pullback of the
objective is Euclidean convex.  Instead, the pullback is generally only
quasiconvex, and the key observation is that the Riemannian lower-sublevel cut
itself becomes affine in Klein coordinates.

The oracle and cutting-plane perspective is also rooted in the classical theory
of efficient convex programming.  Nemirovski's lecture notes
\cite{Nemirovski1994} present the first-order and separation-oracle viewpoint
in which localization replaces explicit descent as the central algorithmic
primitive.  This viewpoint is especially natural for nonsmooth objectives,
where a subgradient may not define a stable descent step but still defines a
valid separating cut.  In the present work, the Riemannian first-order oracle
returns
\[
        (f(x),g),
        \qquad
        g\in\partial f(x),
\]
and the Lorentz identity
\[
        \ip{g}{\log_XY}_{X}
        =
        \frac{\theta}{\kappaC^2\sinh\theta}\ipL{g}{Y}
\]
turns this oracle response into a Euclidean central cut in one fixed chart.

The geometric and combinatorial foundations of the ellipsoid method are
classically developed by Gr\"otschel, Lov\'asz, and Schrijver
\cite{GLS1993}.  Their separation--optimization paradigm explains why the
ellipsoid update is powerful: once every cut is valid for the minimizer set,
success follows from a dimension-dependent volume decrease and a
size-to-accuracy argument.  Our localization lemma is a variant of this
principle adapted to sublevel-cut oracles rather than Euclidean convex
subgradients.  The central-cut update remains the standard one,
\[
        b=\frac{Qa}{\sqrt{a^{\mathsf T}Qa}},
        \qquad
        c^+=c-\frac1{d+1}b,
        \qquad
        Q^+=\frac{d^2}{d^2-1}
        \left(Q-\frac2{d+1}bb^{\mathsf T}\right),
\]
but the source of the cut is Riemannian and the fixed localization body lives
in the Klein image of the hyperbolic ball.

The geometric convexity assumptions used here are part of the broader theory
of convex analysis in spaces of nonpositive curvature.  Ba\v{c}\'ak
\cite{Bacak2014} develops convex analysis and optimization in Hadamard spaces,
where geodesic convexity, metric projections, and nonsmooth variational
arguments can be formulated without relying on linear structure.  Hyperbolic
space is a smooth Hadamard manifold, so these ideas provide the conceptual
background for Lipschitz geodesically convex objectives.  The present paper
uses a more specialized setting: rather than working intrinsically with general
Hadamard-space tools, it exploits the hyperboloid and Klein models to convert
the relevant intrinsic cuts into affine Euclidean cuts.

The Riemannian optimization notation and first-order framework are aligned with
the modern smooth-manifold optimization literature.  Boumal \cite{Boumal2023}
presents the basic objects used throughout this paper, including tangent
spaces, Riemannian metrics, exponential and logarithm maps, and first-order
optimality notions on manifolds.  Our setting is nonsmooth, but the central
inequality
\[
        f(y)\ge f(x)+\ip{\xi}{\log_x(y)}_x,
        \qquad
        \xi\in\partial f(x),
\]
is the natural geodesic analogue of the Euclidean subgradient inequality.
The main technical point is that, in hyperbolic space, this intrinsic inequality
has an exact projective representation in the Klein chart.

The global nonpositive-curvature background is also connected to the metric
geometry of CAT(0) spaces and Hadamard spaces.  Bridson and Haefliger
\cite{BridsonHaefliger1999} give the standard reference for metric spaces of
nonpositive curvature, including uniqueness of geodesics and convexity
phenomena that underlie Hadamard geometry.  These properties justify the
geometric convexity framework, but they do not by themselves yield a global
Euclidean localization scheme.  The distinction is important: nonpositive
curvature gives a robust intrinsic convexity theory, whereas the one-shot
cutting-plane proof requires the stronger projective fact that hyperbolic
geodesics become Euclidean chords in Klein coordinates.

For the differential-geometric conventions used in the paper, including
Riemannian metrics, curvature, geodesics, and the exponential map, we rely on
the standard framework of do Carmo \cite{doCarmo1992}.  In the hyperboloid
model with metric scaled by $\kappaC^{-2}$, the sectional curvature is
$-\kappaC^2$, and the hyperbolic distance is expressed through the Lorentz
inner product by
\[
        \dist_{\HH}(X,Y)
        =
        \kappaC^{-1}\arcosh(-\ipL{X}{Y}).
\]
These conventions fix the normalization of all constants in the query bound,
especially the dimensionless radius $s=\kappaC r$ and the target scale
$\eps Mr$.

The specific hyperbolic models used here are classical.  Ratcliffe
\cite{Ratcliffe2006} provides a standard reference for the hyperboloid model,
the Beltrami--Klein model, and the projective description of hyperbolic
geodesics.  The paper relies on two consequences of this model.  First, the
closed hyperbolic ball $\bar B_{\HH}(o,r)$ has Klein image
\[
        \bar B_{\R^d}(0,\tanh(\kappaC r)).
\]
Second, hyperbolic geodesic segments are mapped to Euclidean line segments.
These facts are not merely geometric background; they are what allow one fixed
Euclidean ellipsoid to localize the entire feasible region.

Finally, the limitation of the method is related to Beltrami-type rigidity.
Matveev \cite{Matveev2006} discusses the geometric content of Beltrami's
theorem, namely that metrics whose unparametrized geodesics are straight lines
are highly constrained and, under the usual hypotheses, have constant sectional
curvature.  This explains why the proof is genuinely hyperbolic rather than a
generic bounded-curvature argument.  A one-chart method with affine images of
Riemannian subgradient halfspaces would force strong projective structure.
Thus, for arbitrary manifolds with curvature only bounded above and below, one
should not expect the same ordinary ellipsoid proof to apply without a new
localization family for curved cuts.
\section{Hyperboloid model, Klein coordinates, and geodesic chords}

Let
\[
        \ipL{U}{V}=-U_0V_0+\sum_{i=1}^dU_iV_i,
        \qquad
        U=(U_0,\bar U),
        \qquad
        V=(V_0,\bar V).
\]
We represent curvature $-\kappaC^2$ hyperbolic space by
\[
        \HH^d_{-\kappaC^2}
        =\{X\in\R^{d+1}:\ipL{X}{X}=-1,
        \ X_0>0\},
        \qquad
        T_X\HH^d_{-\kappaC^2}=\{U:\ipL{U}{X}=0\},
\]
with scaled metric
\[
        \ip{U}{V}_X=\kappaC^{-2}\ipL{U}{V}.
\]
If
\[
        \theta(X,Y)=\arcosh(-\ipL{X}{Y}),
\]
then
\[
        \dist_{\HH}(X,Y)=\kappaC^{-1}\theta(X,Y).
\]
By applying an isometry we assume first that the ball center is
\[
        o=(1,0,\ldots,0).
\]
The Beltrami--Klein map is
\[
        \Phi(X)=\frac{\bar X}{X_0},
        \qquad
        \Phi^{-1}(u)=X(u)=\frac{(1,u)}{\sqrt{1-\norm{u}^2}},
        \qquad
        \B^d=\{u:\norm{u}<1\}.
\]
For $u\in\B^d$,
\[
        -\ipL{o}{X(u)}=\frac1{\sqrt{1-\norm{u}^2}},
\]
and hence
\[
        \dist_{\HH}(o,X(u))
        =\kappaC^{-1}\arcosh\left(\frac1{\sqrt{1-\norm{u}^2}}\right)
        =\kappaC^{-1}\artanh\norm{u}.
\]
Therefore
\[
        \Phi\bigl(\bar B_{\HH}(o,r)\bigr)
        =\bar B_{\R^d}(0,R_s),
        \qquad
        R_s=\tanh s,
        \qquad
        s=\kappaC r.
\]

\begin{lemma}[geodesics are Klein chords]\label{lem:geodesics-chords}
For $u,v\in\B^d$, the hyperbolic geodesic segment from $X(u)$ to $X(v)$ has Klein image
\[
        [u,v]=\{(1-t)u+tv:0\le t\le1\}.
\]
Consequently, if $C\subset\HH^d_{-\kappaC^2}$ is geodesically convex, then $\Phi(C)$ is Euclidean convex.
\end{lemma}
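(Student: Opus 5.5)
The plan is to use the standard description of hyperboloid geodesics as intersections with linear $2$-planes through the origin of $\R^{d+1}$, and then observe that the radial projection $\Phi$ sends such a plane section to a chord. Let $X=X(u)$ and $Y=X(v)$ with $u\neq v$; the case $u=v$ is trivial. Set $\theta=\theta(X,Y)>0$. I will write the unit-speed geodesic explicitly as
\[
        \gamma(\tau)=\frac{\sinh(\theta-\tau)}{\sinh\theta}X+\frac{\sinh\tau}{\sinh\theta}Y,
        \qquad 0\le\tau\le\theta,
\]
where $\tau$ is measured in $\kappaC$-rescaled arclength. A direct Lorentz computation gives $\ipL{\gamma}{\gamma}=-1$. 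The computation uses $\ipL{X}{X}=\ipL{Y}{Y}=-1$, $\ipL{X}{Y}=-\cosh\theta$, and the identity $\sinh^2(\theta-\tau)+\sinh^2\tau+2\cosh\theta\sinh(\theta-\tau)\sinh\tau=\sinh^2\theta$.

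Next I will check that $\gamma$ is a geodesic and that its first coordinate stays positive. The second derivative satisfies $\gamma''=\gamma$, which is normal to $T_\gamma\HH$ in the Lorentz sense. Hence the covariant acceleration, which is the $\ipL{\cdot}{\cdot}$-orthogonal projection of $\gamma''$ onto $T_\gamma\HH$, vanishes. The scaling $\kappaC^{-2}$ of the metric does not change geodesics. The curve $\gamma$ also has positive first coordinate, since both coefficients are nonnegative and $X_0,Y_0>0$. Since geodesics between two points of hyperbolic space are unique (it is Hadamard), $\gamma$ is the geodesic segment. This uniqueness is the one place where I invoke outside background; I could instead cite it from the hyperboloid-model references already in the paper.

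Then I will apply $\Phi$. Write $\gamma(\tau)=\alpha(\tau)X+\beta(\tau)Y$ with $\alpha,\beta\ge0$ and $\alpha+\beta>0$. Then
\[
        \Phi(\gamma(\tau))=\frac{\alpha X_0u+\beta Y_0v}{\alpha X_0+\beta Y_0}=(1-t)u+tv,
        \qquad
        t(\tau)=\frac{\beta Y_0}{\alpha X_0+\beta Y_0}\in[0,1].
\]
The map $t(\tau)$ is continuous, equals $0$ at $\tau=0$ and $1$ at $\tau=\theta$. It is also monotone, because $\beta/\alpha=\sinh\tau/\sinh(\theta-\tau)$ is increasing. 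Therefore the image of the segment is exactly $[u,v]$.

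For the consequence, take $u,v\in\Phi(C)$. Geodesic convexity of $C$ puts the whole geodesic segment from $X(u)$ to $X(v)$ inside $C$. By the first part its image is $[u,v]$, so $[u,v]\subseteq\Phi(C)$. The only mildly delicate step is confirming that $\gamma$ is the geodesic, that is, the normal-acceleration argument together with uniqueness; everything else is algebra.
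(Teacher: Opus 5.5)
Your proof is correct and is essentially the paper's argument made explicit: your $\gamma(\tau)$, a nonnegative combination of $X(u)$ and $X(v)$, is exactly the arc of $\HH\cap\operatorname{span}\{X(u),X(v)\}$, and your formula for $t(\tau)$ is the central projection of that plane onto the line through $(1,u)$ and $(1,v)$. Your verification that $\gamma''=\gamma$ is Lorentz-normal, together with uniqueness of geodesics, justifies the paper's assertion that the geodesic is this plane section. Nonnegativity of $\alpha,\beta$ plus continuity of $t$ justifies its claim that the arc projects onto exactly $[u,v]$ rather than a larger part of the chord.
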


\begin{proof}
Let
\[
        P=\operatorname{span}\{X(u),X(v)\}\subset\R^{d+1}.
\]
The geodesic is the connected component of
\[
        \HH^d_{-\kappaC^2}\cap P.
\]
Central projection from the origin to the affine plane $X_0=1$ sends every point $Z=(Z_0,\bar Z)$ with $Z_0>0$ to
\[
        \frac{Z}{Z_0}=(1,\Phi(Z)).
\]
Since $P\cap\{X_0=1\}$ is the affine line through $(1,u)$ and $(1,v)$, the projected conic arc is exactly the Euclidean segment joining $u$ and $v$.  If $C$ contains every hyperbolic segment joining two of its points, then $\Phi(C)$ contains every Euclidean segment joining the corresponding Klein points.
\end{proof}

\begin{remark}[why the objective is not pulled back to a convex function]\label{rem:not-convex}
For
\[
        F(u)=f(X(u)),
\]
sublevel sets satisfy
\[
        \{u:F(u)\le\alpha\}
        =\Phi\bigl(\{X:f(X)\le\alpha\}\bigr),
\]
so $F$ is Euclidean quasiconvex.  In general $F$ need not be Euclidean convex.  If
\[
        u_t=(1-t)u+tv,
\]
then $X(u_t)$ lies on the geodesic from $X(u)$ to $X(v)$, but $t$ is not usually proportional to arclength.  Geodesic convexity gives
\[
        f(\gamma(\lambda))\le(1-\lambda)f(\gamma(0))+\lambda f(\gamma(1))
\]
for arclength fraction $\lambda$, not necessarily for the affine parameter $t$ in the Klein chord.  The proof below therefore uses valid sublevel cuts rather than Euclidean convexity of $F$.
\end{remark}

\section{The Lorentz identity and exact Euclidean central cuts}

The key computation is that Riemannian subgradient halfspaces are affine in Klein coordinates.

\begin{proposition}[Lorentz linearization of Riemannian cuts]\label{prop:lorentz-cut}
Let
\[
        X,Y\in\HH^d_{-\kappaC^2},
        \qquad
        g\in T_X\HH^d_{-\kappaC^2},
        \qquad
        \theta=\arcosh(-\ipL{X}{Y}).
\]
Then
\[
        \log_X(Y)=\frac{\theta}{\sinh\theta}\bigl(Y-\cosh\theta\,X\bigr),
\]
and
\[
        \ip{g}{\log_X(Y)}_X
        =\frac{\theta}{\kappaC^2\sinh\theta}\ipL{g}{Y}.
\]
If
\[
        c=\Phi(X),
        \qquad
        u=\Phi(Y),
        \qquad
        \gbar=(g_1,\ldots,g_d)^{\mathsf T},
\]
then
\[
        \ip{g}{\log_X(Y)}_X\le0
        \quad\Longleftrightarrow\quad
        \gbar^{\mathsf T}(u-c)\le0.
\]
Thus the Klein image of the halfspace $\{Y:\ip{g}{\log_X(Y)}_X\le0\}$ is the Euclidean central halfspace
\[
        \{u\in\B^d:\gbar^{\mathsf T}(u-c)\le0\}.
\]
\end{proposition}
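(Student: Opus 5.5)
The plan has three steps, taken in order: derive the logarithm formula from the explicit geodesic, pair it with $g$ using tangency, and then rewrite the sign condition in Klein coordinates.

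\textbf{Logarithm formula.} I would use the hyperboloid description of geodesics. The unit-speed (in Lorentz normalization) geodesic from $X$ with initial direction $V\in T_X\HH^d_{-\kappaC^2}$, $\ipL{V}{V}=1$, is
\[
\gamma(t)=\cosh t\,X+\sinh t\,V .
\]
It reaches $Y$ at $t=\theta$. Solving gives
\[
V=\frac{Y-\cosh\theta\,X}{\sinh\theta},
\]
and one checks that $\ipL{V}{X}=0$ and $\ipL{V}{V}=1$ using $\ipL{X}{Y}=-\cosh\theta$. Then $\log_X(Y)=\theta V$. In the scaled metric this vector has length $\kappaC^{-1}\theta=\dist_{\HH}(X,Y)$, which confirms the normalization. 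I would treat the case $Y=X$ separately: there $\theta=0$, both sides vanish, and $\theta/\sinh\theta$ is read as its limit $1$.

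\textbf{Pairing with $g$.} Since $g\in T_X\HH^d_{-\kappaC^2}$, we have $\ipL{g}{X}=0$, so the $\cosh\theta\,X$ term drops out. Hence
\[
\ip{g}{\log_XY}_X=\kappaC^{-2}\frac{\theta}{\sinh\theta}\ipL{g}{Y}.
\]
The prefactor is strictly positive, so the sign of the left side equals the sign of $\ipL{g}{Y}$.

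\textbf{Klein rewriting.} Write $Y=Y_0(1,u)$ and $X=X_0(1,c)$ with $X_0,Y_0>0$. Then
\[
\ipL{g}{Y}=Y_0\bigl(-g_0+\gbar^{\mathsf T}u\bigr).
\]
Tangency gives
\[
0=\ipL{g}{X}=X_0\bigl(-g_0+\gbar^{\mathsf T}c\bigr),
\]
so $g_0=\gbar^{\mathsf T}c$. Substituting yields
\[
\ipL{g}{Y}=Y_0\,\gbar^{\mathsf T}(u-c).
\]
Because $Y_0>0$, the equivalence $\ip{g}{\log_X(Y)}_X\le0\Leftrightarrow\gbar^{\mathsf T}(u-c)\le0$ follows. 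Since $\Phi$ is a bijection onto $\B^d$, the image statement about the halfspace follows as well.

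\textbf{Main obstacle.} The calculations are routine; the step needing most care is the derivation of the logarithm formula. The sign of $V$ and the branch $\theta\ge0$ must be handled correctly, and the metric scaling by $\kappaC^{-2}$ must be accounted for consistently. In particular, $\exp_X$ in the scaled metric has unit speed at $t/\kappaC$, yet the logarithm as a vector is still $\theta V$; this must be checked directly against the distance formula $\dist_{\HH}(X,Y)=\kappaC^{-1}\theta(X,Y)$.
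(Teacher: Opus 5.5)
Your proposal is correct and follows essentially the same route as the paper: you identify $\log_X(Y)=\theta V$ with $V=(Y-\cosh\theta\,X)/\sinh\theta$ a Lorentz-unit tangent vector, kill the $X$-term using tangency $\ipL{g}{X}=0$, and use $g_0=\gbar^{\mathsf T}c$ to write $\ipL{g}{Y}$ as the positive multiple $Y_0=(1-\norm{u}^2)^{-1/2}$ of $\gbar^{\mathsf T}(u-c)$. The differences are cosmetic: you check $\exp_X(\theta V)=Y$ through the explicit geodesic $\cosh t\,X+\sinh t\,V$, whereas the paper argues that $V$ has norm $\dist_{\HH}(X,Y)$ and points along the geodesic, and you treat $Y=X$ directly where the paper appeals to continuity.
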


\begin{proof}
First,
\[
        \ipL{X}{Y-\cosh\theta X}
        =\ipL{X}{Y}-\cosh\theta\ipL{X}{X}
        =-\cosh\theta+\cosh\theta=0,
\]
so
\[
        Y-\cosh\theta X\in T_X\HH^d_{-\kappaC^2}.
\]
Its Lorentz norm is
\begin{align*}
        \ipL{Y-\cosh\theta X}{Y-\cosh\theta X}
        &=-1-2\cosh\theta\ipL{X}{Y}+\cosh^2\theta\ipL{X}{X}       \\
        &=-1+2\cosh^2\theta-\cosh^2\theta                         \\
        &=\sinh^2\theta.
\end{align*}
Consequently
\[
        V=\frac{\theta}{\sinh\theta}(Y-\cosh\theta X)
\]
has
\[
        \norm{V}_X
        =\kappaC^{-1}\theta
        =\dist_{\HH}(X,Y),
\]
and it points along the unique geodesic from $X$ to $Y$; hence $V=\log_X(Y)$.  Since $g$ is tangent at $X$,
\[
        \ipL{g}{X}=0,
\]
and therefore
\begin{align*}
        \ip{g}{\log_X(Y)}_X
        &=\kappaC^{-2}\frac{\theta}{\sinh\theta}
          \ipL{g}{Y-\cosh\theta X}                         \\
        &=\frac{\theta}{\kappaC^2\sinh\theta}\ipL{g}{Y}.
\end{align*}
The coefficient is positive for $Y\ne X$, and the case $Y=X$ is obtained by continuity.

Now write
\[
        X=X(c)=\frac{(1,c)}{\sqrt{1-\norm{c}^2}},
        \qquad
        Y=X(u)=\frac{(1,u)}{\sqrt{1-\norm{u}^2}}.
\]
Then
\[
        \ipL{g}{Y}
        =\frac{-g_0+\gbar^{\mathsf T}u}{\sqrt{1-\norm{u}^2}}.
\]
Tangency at $X=X(c)$ gives
\[
        0=\ipL{g}{X(c)}
        =\frac{-g_0+\gbar^{\mathsf T}c}{\sqrt{1-\norm{c}^2}},
\]
so
\[
        -g_0+\gbar^{\mathsf T}u=\gbar^{\mathsf T}(u-c).
\]
Because
\[
        \sqrt{1-\norm{u}^2}>0,
        \qquad
        \frac{\theta}{\kappaC^2\sinh\theta}>0,
\]
the sign of $\ip{g}{\log_X(Y)}_X$ is the sign of $\gbar^{\mathsf T}(u-c)$.
\end{proof}

\begin{corollary}[sublevel cuts]\label{cor:sublevel-cut}
Let $D\subseteq\HH^d_{-\kappaC^2}$ be geodesically convex, let $f:D\to\R$ be geodesically convex, and let $g\in\partial f(X)$ at $X\in D$.  If $c=\Phi(X)$, then
\[
        \Phi\bigl(\{Y\in D:f(Y)\le f(X)\}\bigr)
        \subseteq
        \{u:\gbar^{\mathsf T}(u-c)\le0\}.
\]
In particular every minimizer $X^*$ of $f$ over $D$ satisfies
\[
        \gbar^{\mathsf T}(\Phi(X^*)-c)\le0.
\]
\end{corollary}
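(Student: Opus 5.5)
The plan is to combine the Riemannian subgradient inequality with \Cref{prop:lorentz-cut}. Take $Y\in D$ with $f(Y)\le f(X)$. Because $g\in\partial f(X)$ and $f$ is geodesically convex on the geodesically convex set $D$, we have
\[
        f(Y)\ge f(X)+\ip{g}{\log_X(Y)}_X .
\]
The geodesic from $X$ to $Y$ stays in $D$, so $\log_X(Y)$ is the initial velocity of an admissible geodesic, and the inequality is exactly the defining property of a subgradient. Together with $f(Y)\le f(X)$ this gives $\ip{g}{\log_X(Y)}_X\le f(Y)-f(X)\le0$. Then \Cref{prop:lorentz-cut} applies with $c=\Phi(X)$ and $u=\Phi(Y)$. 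Its sign equivalence turns the intrinsic inequality into $\gbar^{\mathsf T}(u-c)\le0$. Since $Y$ was an arbitrary point of the sublevel set, the inclusion of the Klein image follows.

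The minimizer statement is then immediate. A minimizer $X^*$ of $f$ over $D$ satisfies $f(X^*)\le f(X)$, so $X^*$ lies in the sublevel set, and the inclusion gives $\gbar^{\mathsf T}(\Phi(X^*)-c)\le0$.

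The only point needing care is the convention for $\partial f(X)$ when $X$ lies on the boundary of $D$, for example on the sphere bounding the closed ball. I would use the definition through the first-order inequality over all $Y\in D$, which the paper has already written down. With that definition, no interiority or extension argument is required. I would also handle the case $Y=X$ separately: there both sides vanish, which is consistent with the continuity remark in \Cref{prop:lorentz-cut}. Beyond this bookkeeping I expect no real obstacle, because the substantive computation is already contained in \Cref{prop:lorentz-cut}.
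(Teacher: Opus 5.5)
Your proposal is correct and matches the paper's proof: the subgradient inequality gives $\ip{g}{\log_X(Y)}_X\le f(Y)-f(X)\le0$ on the sublevel set, and \Cref{prop:lorentz-cut} turns this into the Klein halfspace $\gbar^{\mathsf T}(u-c)\le0$. Your extra remarks on the boundary convention for $\partial f(X)$ and on the case $Y=X$ are harmless bookkeeping that the paper leaves implicit.
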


\begin{proof}
For every $Y\in D$ with $f(Y)\le f(X)$, the subgradient inequality gives
\[
        \ip{g}{\log_X(Y)}_X
        \le f(Y)-f(X)
        \le0.
\]
The result follows from \Cref{prop:lorentz-cut}.
\end{proof}

\begin{remark}[zero subgradient]\label{rem:zero-subgradient}
If $\gbar=0$, then tangency gives
\[
        g_0=0,
        \qquad
        g=0.
\]
If $0\in\partial f(X)$, then
\[
        f(Y)\ge f(X)+\ip{0}{\log_X(Y)}_X=f(X),
        \qquad
        Y\in D,
\]
so $X$ is globally optimal over $D$.
\end{remark}

\section{Global metric distortion in the Klein chart}

The Euclidean localization proof also needs a Lipschitz constant for the pulled-back objective.  It is here, and only here, that the large radius enters.

\begin{lemma}[Klein differential and metric tensor]\label{lem:klein-differential}
For $u\in\B^d$ and $h\in\R^d$,
\[
        DX(u)[h]
        =\frac{(0,h)}{\sqrt{1-\norm{u}^2}}
        +\frac{(1,u)\,u^{\mathsf T}h}{(1-\norm{u}^2)^{3/2}}.
\]
Moreover
\[
        \norm{DX(u)[h]}_{X(u)}^2
        =\kappaC^{-2}\left(
          \frac{\norm{h}^2}{1-\norm{u}^2}
          +\frac{(u^{\mathsf T}h)^2}{(1-\norm{u}^2)^2}
          \right).
\]
Hence
\[
        \norm{DX(u)}_{\ell_2\to\HH}
        =\frac1{\kappaC(1-\norm{u}^2)}.
\]
\end{lemma}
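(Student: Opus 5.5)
I will verify the three claims of \Cref{lem:klein-differential} in turn: differentiate $X(u)$, compute the Lorentz norm of the result, and maximize the resulting quadratic form over unit vectors $h$.

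\textbf{Step 1 (differential).} Write $X(u)=\phi(u)\,(1,u)$ with $\phi(u)=(1-\norm{u}^2)^{-1/2}$. Then
\[
D\phi(u)[h]=u^{\mathsf T}h\,(1-\norm{u}^2)^{-3/2},
\]
and the product rule gives
\[
DX(u)[h]=\phi(u)(0,h)+D\phi(u)[h]\,(1,u),
\]
which is exactly the stated formula.

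\textbf{Step 2 (Lorentz norm).} Put $w=1-\norm{u}^2$, $A=(0,h)$ and $B=(1,u)$. Then
\[
\ipL{A}{A}=\norm{h}^2,\qquad \ipL{A}{B}=u^{\mathsf T}h,\qquad \ipL{B}{B}=-1+\norm{u}^2=-w.
\]
Expanding the Lorentz square of $w^{-1/2}A+w^{-3/2}(u^{\mathsf T}h)B$ gives
\[
\frac{\norm{h}^2}{w}+\frac{2(u^{\mathsf T}h)^2}{w^2}-\frac{(u^{\mathsf T}h)^2}{w^2}
=\frac{\norm{h}^2}{w}+\frac{(u^{\mathsf T}h)^2}{w^2}.
\]
Multiplying by $\kappaC^{-2}$ gives the metric identity. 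The one sign to watch is the $-w$ from $\ipL{B}{B}$, which turns the cross term's coefficient $2$ into a net $1$.

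\textbf{Step 3 (operator norm).} The squared norm is the quadratic form
\[
\kappaC^{-2}\,h^{\mathsf T}\Bigl(w^{-1}I+w^{-2}uu^{\mathsf T}\Bigr)h .
\]
The matrix has eigenvalue $w^{-1}$ on $u^\perp$ and eigenvalue
\[
w^{-1}+w^{-2}\norm{u}^2=\frac{w+\norm{u}^2}{w^2}=\frac{1}{w^2}
\]
along $u$. Since $w\le1$, we have $w^{-2}\ge w^{-1}$, so the largest eigenvalue is $w^{-2}$. When $u=0$ both eigenvalues equal $1$, so the bound still holds. Taking square roots gives $\norm{DX(u)}=\frac{1}{\kappaC w}$. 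The maximum is attained at $h\parallel u$, or at any unit $h$ when $u=0$.

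No step is a real obstacle; the computation is routine and the only care needed is the sign bookkeeping in the Lorentz expansion of Step 2.
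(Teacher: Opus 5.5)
Your proposal is correct and follows the paper's proof essentially step for step: the product rule on $X(u)=(1-\norm{u}^2)^{-1/2}(1,u)$, expansion of the Lorentz square, and the eigenvalues $w^{-1}$ (on $u^\perp$) and $w^{-2}$ (along $u$) of $w^{-1}\Id+w^{-2}uu^{\mathsf T}$. The differences are cosmetic: you expand bilinearly using $\ipL{(1,u)}{(1,u)}=-w$ where the paper expands componentwise, and you state explicitly the comparison $w^{-2}\ge w^{-1}$ and the degenerate case $u=0$, which the paper leaves implicit.
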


\begin{proof}
Let
\[
        \alpha(u)=(1-\norm{u}^2)^{-1/2},
        \qquad
        X(u)=\alpha(u)(1,u).
\]
Then
\[
        D\alpha(u)[h]
        =(1-\norm{u}^2)^{-3/2}u^{\mathsf T}h,
\]
and therefore
\[
        DX(u)[h]=D\alpha(u)[h](1,u)+\alpha(u)(0,h),
\]
which is the displayed formula.

Set
\[
        q=u^{\mathsf T}h,
        \qquad
        \rho=\norm{u}^2,
        \qquad
        \alpha=(1-\rho)^{-1/2}.
\]
Then
\[
        DX(u)[h]=(\alpha^3q,\alpha h+\alpha^3q u).
\]
The Lorentz norm is
\begin{align*}
        \ipL{DX[h]}{DX[h]}
        &=-(\alpha^3q)^2+\norm{\alpha h+\alpha^3q u}^2        \\
        &=-\alpha^6q^2+\alpha^2\norm{h}^2
          +2\alpha^4q^2+\alpha^6q^2\rho                   \\
        &=\alpha^2\norm{h}^2+\bigl(-\alpha^6(1-\rho)+2\alpha^4\bigr)q^2 \\
        &=\alpha^2\norm{h}^2+\bigl(-\alpha^4+2\alpha^4\bigr)q^2 \\
        &=\frac{\norm{h}^2}{1-\norm{u}^2}
          +\frac{(u^{\mathsf T}h)^2}{(1-\norm{u}^2)^2}.
\end{align*}
Multiplication by $\kappaC^{-2}$ gives the metric expression.  The associated Euclidean matrix is
\[
        G(u)=\kappaC^{-2}\left(
        \frac1{1-\norm{u}^2}\Id
        +\frac1{(1-\norm{u}^2)^2}uu^{\mathsf T}
        \right).
\]
Its eigenvalues are
\[
        \lambda_\perp(u)=\kappaC^{-2}\frac1{1-\norm{u}^2}
        \quad(d-1\text{ times}),
        \qquad
        \lambda_\parallel(u)=\kappaC^{-2}\frac1{(1-\norm{u}^2)^2},
\]
so the operator norm is
\[
        \sqrt{\lambda_\parallel(u)}=\frac1{\kappaC(1-\norm{u}^2)}.
\]
\end{proof}

\begin{corollary}[Euclidean Lipschitz constant]\label{cor:euclidean-lipschitz}
If $f$ is $M$-Lipschitz on $\bar B_{\HH}(o,r)$ and
\[
        F(u)=f(X(u)),
        \qquad
        u\in\bar B(0,R_s),
        \qquad
        R_s=\tanh s,
\]
then $F$ is $L_s$-Lipschitz with
\[
        L_s=\frac{M}{\kappaC(1-R_s^2)}
        =\frac{M\cosh^2s}{\kappaC}.
\]
\end{corollary}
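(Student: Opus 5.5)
The plan is to reduce the statement to a mean-value bound along Euclidean segments in the closed ball $\bar B(0,R_s)$, which is convex. Fix $u,v\in\bar B(0,R_s)$ and the chord $u_t=(1-t)u+tv$, $t\in[0,1]$. By convexity of the Euclidean ball, $\norm{u_t}\le R_s$ for every $t$. By \Cref{lem:geodesics-chords}, the curve $t\mapsto X(u_t)$ traces the hyperbolic geodesic from $X(u)$ to $X(v)$. I will not need that fact, however: any curve joining the two points bounds their distance from above. The curve stays inside $\bar B_{\HH}(o,r)$, because $\Phi(\bar B_{\HH}(o,r))=\bar B(0,R_s)$ and $X=\Phi^{-1}$.

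The key step is the length estimate. The map $t\mapsto X(u_t)$ is smooth, since $\norm{u_t}\le R_s<1$. Its Riemannian length is
\[
\int_0^1\norm{DX(u_t)[v-u]}_{X(u_t)}\,dt\le\int_0^1\frac{\norm{v-u}}{\kappaC(1-\norm{u_t}^2)}\,dt\le\frac{\norm{v-u}}{\kappaC(1-R_s^2)},
\]
where the first inequality is the operator-norm bound of \Cref{lem:klein-differential}. The second inequality uses that $1/(1-\rho^2)$ is increasing in $\rho\in[0,1)$. Hence
\[
\dist_{\HH}(X(u),X(v))\le\frac{\norm{u-v}}{\kappaC(1-R_s^2)}.
\]

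Combining this with the $M$-Lipschitz property of $f$ on $\bar B_{\HH}(o,r)$ gives
\[
\abs{F(u)-F(v)}\le M\dist_{\HH}(X(u),X(v))\le \frac{M}{\kappaC(1-R_s^2)}\norm{u-v}.
\]
Finally, $1-\tanh^2s=\sech^2 s$, so $L_s=M\cosh^2 s/\kappaC$.

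I expect no serious obstacle. The only point needing care is that the segment stays inside the region where the uniform bound on $\norm{DX}$ holds. This is exactly why the statement is restricted to $\bar B(0,R_s)$, and convexity of that Euclidean ball settles it.
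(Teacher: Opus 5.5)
Your proposal is correct and follows essentially the same argument as the paper. Both bound $\dist_{\HH}(X(u),X(v))$ by the length of the image of the Euclidean chord, using the operator-norm bound of \Cref{lem:klein-differential} uniformly on $\bar B(0,R_s)$ by convexity of the ball, and then conclude via the $M$-Lipschitz property of $f$ and $1-\tanh^2 s=\sech^2 s$.
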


\begin{proof}
For $u,v\in\bar B(0,R_s)$, set
\[
        \ell(t)=u+t(v-u),
        \qquad
        0\le t\le1.
\]
Convexity of $\bar B(0,R_s)$ gives $\norm{\ell(t)}\le R_s$.  Thus
\begin{align*}
        \dist_{\HH}(X(u),X(v))
        &\le\int_0^1\norm{DX(\ell(t))[v-u]}_{X(\ell(t))}\,dt       \\
        &\le\int_0^1\frac{\norm{v-u}}{\kappaC(1-R_s^2)}\,dt          \\
        &=\frac{\norm{v-u}}{\kappaC(1-R_s^2)}.
\end{align*}
Therefore
\[
        \abs{F(u)-F(v)}
        \le M\dist_{\HH}(X(u),X(v))
        \le L_s\norm{u-v}.
\]
Since
\[
        1-R_s^2=1-\tanh^2s=\sech^2s,
\]
we have
\[
        \frac1{1-R_s^2}=\cosh^2s.
\]
\end{proof}

\section{Euclidean localization with sublevel cuts}

The Klein pullback is generally quasiconvex, so the Euclidean lemma must be formulated for sublevel cuts rather than subgradients of a convex function.

\begin{definition}[sublevel-cut oracle]\label{def:sublevel-cut-oracle}
Let
\[
        \calB_R=\bar B(0,R)\subset\R^d,
        \qquad
        F:\calB_R\to\R.
\]
A sublevel-cut oracle returns, at a query $c\in\calB_R$, either a certificate that $c$ is a minimizer or a vector $a(c)\ne0$ such that
\[
        \{u\in\calB_R:F(u)\le F(c)\}
        \subseteq
        \{u:a(c)^{\mathsf T}(u-c)\le0\}.
\]
\end{definition}

\subsection{The one-dimensional branch}

\begin{lemma}[interval localization for $d=1$]\label{lem:interval-sublevel}
Let
\[
        F:[-R,R]\to\R
\]
be $L$-Lipschitz and attain its minimum.  Assume access to a one-dimensional sublevel-cut oracle.  Let
\[
        A=\frac{LR}{\eta},
        \qquad
        A\ge1,
        \qquad
        0<\eta\le LR.
\]
The midpoint interval method returns a queried point $\hat c$ with
\[
        F(\hat c)-F^*\le\eta
\]
after at most
\[
        1+\left\lceil \log_2 A\right\rceil
\]
queries.  In particular this is no more than
\[
        \left\lceil 4\log(16A)\right\rceil.
\]
\end{lemma}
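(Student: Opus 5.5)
The plan is the standard bisection argument, adapted to sublevel cuts in place of derivatives. The method maintains an interval $I_k=[\ell_k,r_k]\subseteq[-R,R]$, starting from $I_0=[-R,R]$ of length $2R$. It queries the midpoint $c_k$. If the oracle certifies optimality we stop. Otherwise $a(c_k)\ne0$ is a scalar, and the halfline $\{u:a(c_k)(u-c_k)\le0\}$ is either $\{u\le c_k\}$ or $\{u\ge c_k\}$. We set $I_{k+1}$ to the corresponding half of $I_k$, so $|I_{k+1}|=|I_k|/2$ and $|I_k|=2R\,2^{-k}$. The output $\hat c$ is the queried point with the smallest recorded value $F(c_j)$.

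The key invariant is the following. Fix a minimizer $u^*$. Either some queried $c_j$ already satisfies $F(c_j)\le F(u^*)+\eta$, or $u^*\in I_k$. To prove it, note that if $u^*\notin I_{k+1}$ but $u^*\in I_k$, then the cut at $c_k$ excluded $u^*$. By the definition of a sublevel-cut oracle, every $u$ with $F(u)\le F(c_k)$ lies in the kept halfline, so $F(u^*)>F(c_k)$. This is impossible, since $u^*$ is a minimizer. So the minimizer is never cut off. (Points on the boundary $u^*=c_k$ pose no issue: they stay in the closed half.) Hence $u^*\in I_k$ for all $k$ without any case split.

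For accuracy, we use $|c_k-u^*|\le |I_k|/2=R\,2^{-k}$ together with the Lipschitz property to get
\[
F(c_k)-F^*\le LR\,2^{-k}.
\]
Choosing $k=\lceil\log_2 A\rceil$ gives $LR\,2^{-k}\le LR/A=\eta$. The queries are $c_0,\dots,c_k$, so there are $1+\lceil\log_2A\rceil$ of them. The best-value output then satisfies the bound, and early certification only helps. The step needing the most care is the bookkeeping that $u^*$ stays in the closed interval, including the degenerate case of a zero cut, which is handled by the certificate branch of the oracle.

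Finally we verify the numeric comparison $1+\lceil\log_2A\rceil\le\lceil4\log(16A)\rceil$. Since $A\ge1$, we have $1+\lceil\log_2A\rceil\le 2+\log_2A=2+\log A/\log 2\le 2+1.45\log A$. On the other side, $4\log(16A)=4\log16+4\log A\ge 11+4\log A$. The second quantity exceeds the first, and taking ceilings preserves the inequality.
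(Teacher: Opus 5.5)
Your proof is correct and follows the paper's argument essentially step by step. The shared steps are midpoint bisection, the observation that each sublevel cut keeps the minimizer because $F(u^*)\le F(c_k)$, the Lipschitz bound $F(c_k)-F^*\le LR\,2^{-k}$, and a direct numerical comparison. Your final comparison is actually slightly more careful than the paper's: you absorb the extra $+1$ and the ceiling explicitly via $1+\lceil\log_2A\rceil\le 2+\log_2A$, whereas the paper only records $4\log(16A)-\log_2A>0$ and leaves implicit that this gap is at least $4\log 16>1$.
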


\begin{proof}
Initialize
\[
        I_0=[-R,R].
\]
At step $k\ge0$, query the midpoint
\[
        c_k=\frac{\ell_k+r_k}{2},
        \qquad
        I_k=[\ell_k,r_k].
\]
If the oracle returns $a_k>0$, then every minimizer $u^*$ satisfies
\[
        a_k(u^*-c_k)\le0
        \quad\Longrightarrow\quad
        u^*\le c_k,
\]
so set
\[
        I_{k+1}=[\ell_k,c_k].
\]
If $a_k<0$, then $u^*\ge c_k$ and set
\[
        I_{k+1}=[c_k,r_k].
\]
Thus
\[
        u^*\in I_k,
        \qquad
        \operatorname{length}(I_k)=2R\,2^{-k},
        \qquad
        \abs{c_k-u^*}\le R\,2^{-k}.
\]
By Lipschitz continuity,
\[
        F(c_k)-F^*
        \le L\abs{c_k-u^*}
        \le LR\,2^{-k}.
\]
If
\[
        k\ge \log_2\frac{LR}{\eta},
\]
then $F(c_k)-F^*\le\eta$.  The number of queries is $k+1$, hence the first claim.

For $A\ge1$,
\[
        4\log(16A)
        \ge 4\log16
        >1,
\]
and
\[
        4\log(16A)-\log_2 A
        =4\log16+\left(4-\frac1{\log2}\right)\log A
        >0,
\]
because $4-1/\log2>0$.  Thus
\[
        1+\lceil\log_2A\rceil
        \le \lceil4\log(16A)\rceil .
\]
\end{proof}

\subsection{The ellipsoid branch for \texorpdfstring{$d\ge2$}{d>=2}}

\begin{lemma}[ellipsoid localization with sublevel cuts]\label{lem:ellipsoid-sublevel}
Let
\[
        d\ge2,
        \qquad
        \calB_R=\bar B(0,R)\subset\R^d,
        \qquad
        F:\calB_R\to\R.
\]
Assume that $F$ is $L$-Lipschitz, attains $F^*$ on $\calB_R$, and has a sublevel-cut oracle.  Suppose
\[
        0<\eta\le4LR.
\]
Run the central-cut ellipsoid method with
\[
        E_0=\calB_R=E(0,R^2\Id),
        \qquad
        E(c,Q)=\{u:(u-c)^{\mathsf T}Q^{-1}(u-c)\le1\}.
\]
If the center $c_k$ is infeasible, $\norm{c_k}>R$, use the feasibility cut
\[
        c_k^{\mathsf T}(u-c_k)\le0.
\]
If $c_k\in\calB_R$, query the sublevel-cut oracle and use its cut.  Let $\hat c_N$ be the best feasible query among the first $N$ updates.  If
\[
        N\ge 2d(d+1)\log\!\left(\frac{16LR}{\eta}\right),
\]
then
\[
        F(\hat c_N)-F^*\le\eta.
\]
Each ellipsoid update costs $O(d^2)$ arithmetic operations.
\end{lemma}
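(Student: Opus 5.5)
The plan follows the classical Nemirovski-style argument: combine volume decay with a shrunken-copy (homothety) argument around a minimizer. The only change is that Euclidean subgradients are replaced by the sublevel cuts of \Cref{def:sublevel-cut-oracle}. Fix a minimizer $u^*\in\calB_R$ and, for $\delta\in(0,1]$, the shrunken body
\[
        S_\delta=u^*+\delta(\calB_R-u^*)=\{(1-\delta)u^*+\delta w:\ w\in\calB_R\}.
\]
This set lies in $\calB_R$ by convexity and has volume $\delta^d\vol(\calB_R)$. By Lipschitz continuity, every $u\in S_\delta$ satisfies $\norm{u-u^*}\le\delta\norm{w-u^*}\le 2R\delta$, so
\[
        F(u)\le F^*+2LR\delta\qquad\text{on } S_\delta .
\]
The standard central-cut facts, recalled in the introduction, are that $E_{k+1}\supseteq E_k\cap\{a_k^{\mathsf T}(u-c_k)\le0\}$ and $\vol(E_{k+1})\le e^{-1/(2(d+1))}\vol(E_k)$. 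Hence
\[
        \vol(E_N)\le e^{-N/(2(d+1))}\vol(\calB_R).
\]
The cost is $O(d^2)$ per update, coming from $Qa$, $a^{\mathsf T}Qa$ and $bb^{\mathsf T}$.

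Next I will prove the key claim. Suppose that every feasible query $c_k$ ($k<N$) satisfies $F(c_k)>F^*+2LR\delta$. I will show by induction that $S_\delta\subseteq E_k$ for all $k\le N$. The base case is $S_\delta\subseteq\calB_R=E_0$. For the inductive step there are two kinds of cut. A feasibility cut is used only when $\norm{c_k}>R$. Then every $u\in\calB_R$ satisfies $c_k^{\mathsf T}u\le\norm{c_k}R<\norm{c_k}^2$, so the cut keeps all of $\calB_R\supseteq S_\delta$. For an oracle cut at a feasible $c_k$, each $u\in S_\delta$ has $F(u)\le F^*+2LR\delta<F(c_k)$. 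So $u$ lies in the sublevel set $\{F\le F(c_k)\}$, which the oracle guarantees is contained in the halfspace. If the oracle instead certifies $c_k$ as a minimizer, we are done immediately. Thus $S_\delta\subseteq E_k\cap H_k\subseteq E_{k+1}$, and comparing volumes gives
\[
        \delta^d\le e^{-N/(2(d+1))},\qquad\text{i.e.}\qquad \delta\le e^{-N/(2d(d+1))}.
\]

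To conclude, I choose $\delta=\eta/(2LR)$, or more conservatively $\delta$ slightly smaller to absorb the strict inequality. The hypothesis $\eta\le4LR$ is meant to guarantee $\delta\le 1$, or at least $\delta\le2$ with the trivial case handled separately. The condition $N\ge2d(d+1)\log(16LR/\eta)$ gives $e^{-N/(2d(d+1))}\le\eta/(16LR)<\eta/(2LR)$, which contradicts the volume bound. Therefore some feasible query has $F(c_k)\le F^*+\eta$, and so $\hat c_N$ does as well.

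One subtlety is that at least one feasible query must exist for $\hat c_N$ to be defined. This follows from the same argument: if all queries were infeasible, the inclusion $S_\delta\subseteq E_N$ would hold vacuously, again contradicting the volume bound. The main obstacle I expect is bookkeeping rather than conceptual. I need strictness, to ensure $S_\delta$ sits strictly below the sublevel threshold so that the cut inclusion applies. I also need to confirm that the constant $16$ leaves slack over the factor $2$ from the diameter bound. The case $\delta>1$, possible when $\eta>2LR$, is trivial, because $F(c)-F^*\le 2LR<\eta$ for any feasible $c$. The proof therefore only needs the single feasibility argument above.
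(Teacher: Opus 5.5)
Your proposal is correct and follows the same scheme as the paper: the volume bound $\vol(E_N)\le e^{-N/(2(d+1))}\vol(\calB_R)$ is played against the volume of a near-optimal set that survives every feasibility cut and every sublevel cut, which gives a contradiction. The only difference is the surviving set. Your homothetic copy $S_\delta=\bar B\bigl((1-\delta)u^*,\delta R\bigr)$ with $\delta=\min\{1,\eta/(2LR)\}$ treats interior and near-boundary minimizers uniformly, whereas the paper uses a ball of radius $\eta/(8L)$ with a two-case split (its boundary case is centered at the same point $(1-2\rho/R)u^*$), so your version avoids the case analysis and leaves more slack in the constant $16$.
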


\begin{proof}
For a central cut
\[
        a^{\mathsf T}(u-c)\le0,
        \qquad
        a\ne0,
\]
the standard update is
\begin{align}
        b&=\frac{Qa}{\sqrt{a^{\mathsf T}Qa}},                                      \label{eq:ell-b}\\
        c^+&=c-\frac1{d+1}b,                                                       \label{eq:ell-c}\\
        Q^+&=\frac{d^2}{d^2-1}\left(Q-\frac{2}{d+1}bb^{\mathsf T}\right).           \label{eq:ell-Q}
\end{align}
For $d\ge2$ this is well-defined and satisfies
\[
        E(c,Q)\cap\{u:a^{\mathsf T}(u-c)\le0\}
        \subseteq E(c^+,Q^+),
\]
with
\[
        \vol(E(c^+,Q^+))
        \le \exp\!\left(-\frac1{2(d+1)}\right)\vol(E(c,Q)).
\]
Consequently
\[
        \vol(E_N)
        \le \exp\!\left(-\frac{N}{2(d+1)}\right)\vol(\calB_R).
\]

Assume, for contradiction, that every feasible query satisfies
\[
        F(c_k)>F^*+\eta.
\]
Let
\[
        u^*\in\argmin_{\calB_R}F,
        \qquad
        \rho=\frac{\eta}{8L}.
\]
The hypothesis $\eta\le4LR$ gives
\[
        0<\rho\le\frac R2.
\]
We claim that there is a Euclidean ball $B_\rho$ of radius $\rho$ such that
\[
        B_\rho\subseteq\calB_R,
        \qquad
        F(u)\le F^*+\frac\eta2
        \quad\forall u\in B_\rho.
\]
If
\[
        \dist(u^*,\partial\calB_R)\ge\rho,
\]
take $B_\rho=B(u^*,\rho)$.  Then for $u\in B_\rho$,
\[
        F(u)\le F^*+L\norm{u-u^*}\le F^*+L\rho=F^*+\frac\eta8.
\]
If
\[
        \dist(u^*,\partial\calB_R)<\rho,
\]
set
\[
        \tilde u=\left(1-\frac{2\rho}{R}\right)u^*.
\]
Since $\rho\le R/2$,
\[
        \norm{\tilde u}
        \le \left(1-\frac{2\rho}{R}\right)R
        =R-2\rho,
\]
so
\[
        B(\tilde u,\rho)\subseteq\calB_R.
\]
Moreover
\[
        \norm{\tilde u-u^*}
        =\frac{2\rho}{R}\norm{u^*}
        \le2\rho,
\]
and for $u\in B(\tilde u,\rho)$,
\[
        \norm{u-u^*}
        \le\norm{u-\tilde u}+\norm{\tilde u-u^*}
        \le3\rho.
\]
Thus
\[
        F(u)\le F^*+3L\rho=F^*+\frac{3\eta}{8}<F^*+\frac\eta2.
\]
The claim follows.

For an infeasible center $c_k$, the feasibility cut contains all of $\calB_R$, because
\[
        c_k^{\mathsf T}(u-c_k)
        \le \norm{c_k}\norm{u}-\norm{c_k}^2
        \le \norm{c_k}(R-\norm{c_k})<0.
\]
For a feasible center, the contradiction assumption and the construction of $B_\rho$ imply
\[
        B_\rho\subseteq\{u\in\calB_R:F(u)\le F^*+\eta/2\}
        \subseteq\{u\in\calB_R:F(u)\le F(c_k)\}.
\]
The sublevel cut therefore contains $B_\rho$.  By induction over the ellipsoid containments,
\[
        B_\rho\subseteq E_N.
\]
Hence
\[
        \vol(E_N)
        \ge \vol(B_\rho)
        =\left(\frac{\rho}{R}\right)^d\vol(\calB_R).
\]
Combining the upper and lower bounds gives
\[
        \exp\!\left(-\frac{N}{2(d+1)}\right)
        \ge \left(\frac{\rho}{R}\right)^d
        =\left(\frac{\eta}{8LR}\right)^d.
\]
But the assumed lower bound on $N$ yields
\[
        \exp\!\left(-\frac{N}{2(d+1)}\right)
        \le \left(\frac{\eta}{16LR}\right)^d
        <\left(\frac{\eta}{8LR}\right)^d,
\]
a contradiction.  Therefore at least one feasible query has value at most $F^*+\eta$.  The update \eqref{eq:ell-b}--\eqref{eq:ell-Q} consists of one matrix-vector product, one quadratic form, and one rank-one update, hence costs $O(d^2)$ arithmetic operations.
\end{proof}

\section{The global Klein cutting-plane algorithm}

The algorithm is the Euclidean localization method applied to the Klein image
\[
        \calB_s=\bar B(0,R_s),
        \qquad
        R_s=\tanh s.
\]
The only hyperbolic operation inside the localization loop is the conversion
\[
        c_k\mapsto X(c_k)=\frac{(1,c_k)}{\sqrt{1-\norm{c_k}^2}},
        \qquad
        g_k\mapsto a_k=(g_{k,1},\ldots,g_{k,d})^{\mathsf T}.
\]

\begin{algorithm}[global Klein cutting plane]\label{alg:gkcp}
Input:
\[
        d\ge1,
        \quad
        \kappaC>0,
        \quad
        r>0,
        \quad
        \eps\in(0,1),
        \quad
        \text{first-order oracle for }f\text{ on }\bar B_{\HH}(o,r).
\]
Set
\[
        s=\kappaC r,
        \qquad
        R_s=\tanh s,
        \qquad
        \eta=\eps Mr=\eps M\frac{s}{\kappaC},
\]
and
\[
        N=N(d,s,\eps)
        =\left\lceil2d(d+1)\log\left(\frac{16\sinh s\cosh s}{s\eps}\right)\right\rceil.
\]

If $d=1$, run the interval method of \Cref{lem:interval-sublevel} on $[-R_s,R_s]$ with cuts obtained below.  If $d\ge2$, initialize
\[
        c_0=0,
        \qquad
        Q_0=R_s^2\Id_d,
        \qquad
        E_0=E(c_0,Q_0).
\]
For $k=0,\ldots,N-1$:
\begin{enumerate}[leftmargin=2.4em,itemsep=0.25em]
\item Let $c_k$ be the current center.
\item If $\norm{c_k}>R_s$, use the feasibility normal
\[
        a_k=c_k
\]
and the cut
\[
        a_k^{\mathsf T}(u-c_k)
        =c_k^{\mathsf T}(u-c_k)
        \le0.
\]
\item If $\norm{c_k}\le R_s$, query
\[
        X_k=X(c_k)=\frac{(1,c_k)}{\sqrt{1-\norm{c_k}^2}}.
\]
The oracle returns
\[
        f(X_k),
        \qquad
        g_k\in\partial f(X_k).
\]
If $g_k=0$, return $X_k$.  Otherwise set
\[
        a_k=(g_{k,1},\ldots,g_{k,d})^{\mathsf T}
\]
and use the central cut
\[
        a_k^{\mathsf T}(u-c_k)\le0.
\]
Record $X_k$ and its value.
\item If $d\ge2$, update
\[
        b_k=\frac{Q_ka_k}{\sqrt{a_k^{\mathsf T}Q_ka_k}},
        \qquad
        c_{k+1}=c_k-\frac1{d+1}b_k,
\]
\[
        Q_{k+1}=\frac{d^2}{d^2-1}
        \left(Q_k-\frac{2}{d+1}b_kb_k^{\mathsf T}\right).
\]
If $d=1$, update the interval by the sign of $a_k$:
\[
        a_k>0\Rightarrow I_{k+1}=I_k\cap(-\infty,c_k],
        \qquad
        a_k<0\Rightarrow I_{k+1}=I_k\cap[c_k,\infty).
\]
\end{enumerate}
Return the recorded feasible query of smallest objective value.
\end{algorithm}

\begin{proposition}[cut validity]\label{prop:algorithm-cuts-valid}
Every cut used in \Cref{alg:gkcp} contains the Klein image of the minimizer set.  Whenever $c_k$ is feasible and $X_k=X(c_k)$, the cut also contains the complete lower sublevel set
\[
        \Phi\bigl(\{X\in\bar B_{\HH}(o,r):f(X)\le f(X_k)\}\bigr).
\]
\end{proposition}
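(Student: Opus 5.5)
The plan is to split the cuts used in \Cref{alg:gkcp} into the two kinds the algorithm produces, feasibility cuts and oracle cuts, and verify each one separately. Write $S^*\subseteq\bar B_{\HH}(o,r)$ for the minimizer set of $f$. It is nonempty because $f$ is continuous (being Lipschitz) and the closed ball is compact. Since $\Phi$ is a bijection of $\bar B_{\HH}(o,r)$ onto $\calB_s=\bar B(0,R_s)$, we have $\Phi(S^*)\subseteq\calB_s$.

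\emph{Feasibility cuts.} Suppose $\norm{c_k}>R_s$. For every $u\in\calB_s$, Cauchy--Schwarz gives
\[
        c_k^{\mathsf T}(u-c_k)\le\norm{c_k}\bigl(R_s-\norm{c_k}\bigr)<0,
\]
which is the same estimate as in the proof of \Cref{lem:ellipsoid-sublevel}. Hence the cut contains all of $\calB_s$, and in particular it contains $\Phi(S^*)$.

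\emph{Oracle cuts.} Suppose $\norm{c_k}\le R_s<1$. Then $X_k=X(c_k)$ is well defined, lies in $\bar B_{\HH}(o,r)$ by the identification $\Phi(\bar B_{\HH}(o,r))=\bar B_{\R^d}(0,R_s)$, and satisfies $\Phi(X_k)=c_k$. When $g_k\ne0$, the cut normal is $a_k=\gbar_k$, and this is nonzero by \Cref{rem:zero-subgradient}. Apply \Cref{cor:sublevel-cut} with $D=\bar B_{\HH}(o,r)$, $X=X_k$ and $c=c_k$; this requires $D$ to be geodesically convex. The corollary gives
\[
        \Phi\bigl(\{X\in D:f(X)\le f(X_k)\}\bigr)\subseteq\{u:a_k^{\mathsf T}(u-c_k)\le0\},
\]
which is the second assertion of the proposition. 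The first assertion follows for oracle cuts because every $X^*\in S^*$ satisfies $f(X^*)\le f(X_k)$, so $S^*$ lies in this sublevel set. When $g_k=0$ no cut is used, and \Cref{rem:zero-subgradient} shows that $X_k$ is optimal.

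The only nontrivial input is geodesic convexity of the closed ball $\bar B_{\HH}(o,r)$; this is the step I expect to need an argument, and I would prove it through the Klein chart. Take $X,Y$ in the ball and set $u=\Phi(X)$, $v=\Phi(Y)$, so that $\norm{u},\norm{v}\le R_s$. By \Cref{lem:geodesics-chords}, the geodesic from $X$ to $Y$ has Klein image the chord $[u,v]$. That chord lies in the Euclidean ball $\bar B(0,R_s)$, which is convex. Finally, the formula $\dist_{\HH}(o,X(w))=\kappaC^{-1}\artanh\norm{w}$ shows that a point whose Klein coordinate has norm at most $R_s$ lies within distance $r$ of $o$, so the whole geodesic stays in the ball.

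If one prefers to avoid the chart, the same conclusion follows from the standard fact that distance balls in Hadamard manifolds are geodesically convex, because $\dist(o,\cdot)$ is convex. Either route completes the verification for every type of cut.
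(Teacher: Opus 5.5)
Your proposal is correct and follows the paper's proof. It uses the same split into feasibility cuts, handled by the Cauchy--Schwarz estimate $c_k^{\mathsf T}(u-c_k)\le\norm{c_k}(R_s-\norm{c_k})<0$, and oracle cuts, handled by the sublevel-cut corollary, with the zero-subgradient remark covering both $g_k=0$ and $a_k\ne0$. You also verify that $\bar B_{\HH}(o,r)$ is geodesically convex via Klein chords and that the minimizer set is nonempty by compactness; these are hypotheses the paper uses only implicitly.
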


\begin{proof}
If $\norm{c_k}>R_s$, then for every $u\in\bar B(0,R_s)$,
\[
        c_k^{\mathsf T}(u-c_k)
        \le \norm{c_k}R_s-\norm{c_k}^2
        <0,
\]
so the feasibility cut contains the whole feasible ball.  If $c_k$ is feasible, \Cref{cor:sublevel-cut} gives
\[
        \Phi\bigl(\{X:f(X)\le f(X_k)\}\bigr)
        \subseteq
        \{u:a_k^{\mathsf T}(u-c_k)\le0\}.
\]
If $a_k=0$, then $g_k=0$ by \Cref{rem:zero-subgradient}; the point is optimal and the algorithm stops.
\end{proof}

\section{Proof of the main theorem}

\begin{proof}[Proof of \Cref{thm:main}]
By a hyperbolic isometry assume $x_0=o$.  Define
\[
        F(u)=f(X(u)),
        \qquad
        u\in\calB_s=\bar B(0,R_s),
        \qquad
        R_s=\tanh s.
\]
Then
\[
        \min_{X\in\bar B_{\HH}(o,r)}f(X)
        =\min_{u\in\calB_s}F(u).
\]
By \Cref{cor:euclidean-lipschitz},
\[
        F\text{ is }L_s\text{-Lipschitz},
        \qquad
        L_s=\frac{M\cosh^2s}{\kappaC}.
\]
The target accuracy is
\[
        \eta=\eps Mr=\eps M\frac{s}{\kappaC}.
\]
The normalized ellipsoid ratio is
\begin{align*}
        \frac{L_sR_s}{\eta}
        &=\frac{(M\cosh^2s/\kappaC)\tanh s}{\eps Ms/\kappaC}       \\
        &=\frac{\cosh^2s\tanh s}{\eps s}                         \\
        &=\frac{\sinh s\cosh s}{\eps s}.
\end{align*}
For $s>0$,
\[
        \sinh s\cosh s>s,
\]
because
\[
        \frac{d}{ds}\bigl(\sinh s\cosh s-s\bigr)
        =\cosh^2s+\sinh^2s-1
        =2\sinh^2s\ge0,
\]
and the difference is zero at $s=0$.  Hence
\[
        \eta<L_sR_s,
\]
so the assumptions
\[
        A=\frac{L_sR_s}{\eta}\ge1,
        \qquad
        \eta\le4L_sR_s
\]
needed in \Cref{lem:interval-sublevel,lem:ellipsoid-sublevel} hold automatically.

For $d=1$, \Cref{lem:interval-sublevel} gives accuracy after at most
\[
        \left\lceil4\log\left(16\frac{L_sR_s}{\eta}\right)\right\rceil
        =\left\lceil2d(d+1)\log\left(16\frac{L_sR_s}{\eta}\right)\right\rceil
\]
queries.  For $d\ge2$, \Cref{lem:ellipsoid-sublevel} gives accuracy after at most
\[
        \left\lceil2d(d+1)\log\left(16\frac{L_sR_s}{\eta}\right)\right\rceil
\]
updates and no more feasible oracle calls.  Substituting the ratio gives
\[
        16\frac{L_sR_s}{\eta}
        =\frac{16\sinh s\cosh s}{s\eps},
\]
which is the stated $N(d,s,\eps)$.

For the simple bound, use
\[
        \sinh s\le s e^s,
        \qquad
        \cosh s\le e^s,
        \qquad
        s>0.
\]
Then
\[
        \log\left(\frac{16\sinh s\cosh s}{s\eps}\right)
        \le
        \log\left(\frac{16e^{2s}}{\eps}\right)
        =2s+\log\frac{16}{\eps}.
\]
Finally,
\[
        s+\log(1/\eps)
        \le (1+s)\log(e/\eps),
\]
and
\[
        1+s\le2\zeta_s,
        \qquad
        \zeta_s=\frac{s}{\tanh s},
\]
because for $0<s\le1$,
\[
        1+s\le2\le2\zeta_s,
\]
while for $s\ge1$,
\[
        1+s\le2s\le2\frac{s}{\tanh s}=2\zeta_s.
\]
Thus
\[
        N(d,s,\eps)=O\bigl(d^2(s+\log(1/\eps))\bigr)
        =O\bigl(d^2\zeta_s\log(e/\eps)\bigr)
\]
in the coarser product-style parameterization.  The arithmetic bounds follow from the interval and ellipsoid updates already analyzed.
\end{proof}

\section{Correct constants and asymptotic regimes}

The formula
\[
        \frac{L_sR_s}{\eta}
        =\frac{\sinh s\cosh s}{s\eps}
\]
is exact.  The small-curvature expansion follows from
\[
        \sinh s=s+\frac{s^3}{6}+\frac{s^5}{120}+O(s^7),
        \qquad
        \cosh s=1+\frac{s^2}{2}+\frac{s^4}{24}+O(s^6),
\]
so
\begin{align*}
        \sinh s\cosh s
        &=\left(s+\frac{s^3}{6}+\frac{s^5}{120}+O(s^7)\right)
          \left(1+\frac{s^2}{2}+\frac{s^4}{24}+O(s^6)\right)  \\
        &=s+\left(\frac12+\frac16\right)s^3
          +\left(\frac1{24}+\frac1{12}+\frac1{120}\right)s^5
          +O(s^7)                                                 \\
        &=s+\frac23s^3+\frac{2}{15}s^5+O(s^7).
\end{align*}
Therefore
\[
        \frac{\sinh s\cosh s}{s}
        =1+\frac23s^2+\frac{2}{15}s^4+O(s^6),
\]
and
\[
        \log\left(\frac{\sinh s\cosh s}{s\eps}\right)
        =\log(1/\eps)+\frac23s^2+O(s^4).
\]
For large $s$,
\[
        \sinh s\cosh s
        =\frac{(e^s-e^{-s})(e^s+e^{-s})}{4}
        =\frac{e^{2s}-e^{-2s}}4
        =\frac{e^{2s}}4(1-e^{-4s}).
\]
Consequently
\[
        \frac{\sinh s\cosh s}{s}
        =\frac{e^{2s}}{4s}(1-e^{-4s}),
\]
and
\[
        \log\left(\frac{\sinh s\cosh s}{s\eps}\right)
        =\log(1/\eps)+2s-\log(4s)+\log(1-e^{-4s}).
\]
Since
\[
        \log(1-z)=-z+O(z^2),
        \qquad
        z=e^{-4s},
\]
the correct large-radius expansion is
\[
        \log\left(\frac{\sinh s\cosh s}{s\eps}\right)
        =\log(1/\eps)+2s-\log(4s)+O(e^{-4s}).
\]
The constant is $4s$, not $2s$, inside the logarithm.

\begin{corollary}[Euclidean limiting case]\label{cor:euclidean-limit}
If $\kappaC\downarrow0$ with $r$ fixed, so that $s=\kappaC r\downarrow0$, then
\[
        \frac{\sinh s\cosh s}{s}\to1,
\]
and the bound becomes
\[
        N(d,s,\eps)\to
        \left\lceil2d(d+1)\log\frac{16}{\eps}\right\rceil.
\]
This is the normalized Euclidean ellipsoid bound for target accuracy $\eps Mr$ on a Euclidean ball of radius $r$.
\end{corollary}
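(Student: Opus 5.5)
The statement to prove is \Cref{cor:euclidean-limit}. The plan is to take the limit directly in the explicit formula for $N(d,s,\eps)$, which by \Cref{thm:main} is
\[
        N(d,s,\eps)=\left\lceil 2d(d+1)\log\!\left(\frac{16\sinh s\cosh s}{s\eps}\right)\right\rceil .
\]

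\textbf{Step 1: limit of the curvature factor.} Using the expansion from the preceding section,
\[
        \frac{\sinh s\cosh s}{s}=1+\tfrac23 s^2+O(s^4).
\]
Equivalently, $\sinh(2s)/(2s)\to1$. Either way, the factor tends to $1$ as $s\downarrow0$.

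\textbf{Step 2: limit of the argument of the ceiling.} By continuity of $\log$, the quantity inside the ceiling converges to
\[
        2d(d+1)\log(16/\eps).
\]
Since $\sinh s\cosh s>s$ (shown in the proof of \Cref{thm:main}), the convergence is strictly from above.

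\textbf{Step 3: passing the ceiling to the limit.} This is the only delicate point, because the ceiling is discontinuous. It is, however, right-continuous from above at every real point: if $x_n\downarrow x$, then $\lceil x_n\rceil=\lceil x\rceil$ for all large $n$, since $\lceil x\rceil\ge x$ and $x_n$ eventually lies in $(x,\lceil x\rceil]$ when $x\notin\mathbb Z$. The case $x\in\mathbb Z$ needs care: there $x_n>x$ gives $\lceil x_n\rceil=x+1$, so the limit would be off by one. I will therefore read the statement as convergence of the unrounded quantity,
\[
        2d(d+1)\log\!\left(\frac{16\sinh s\cosh s}{s\eps}\right)\longrightarrow 2d(d+1)\log\frac{16}{\eps},
\]
and add a remark on the ceiling. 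Since $\log(16/\eps)$ is generically irrational, the ceilings eventually agree in all but those exceptional integer cases.

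\textbf{Step 4: the Euclidean interpretation.} By \Cref{lem:ellipsoid-sublevel}, applied to a Euclidean ball of radius $R=r$ with Lipschitz constant $L=M$ and accuracy $\eta=\eps Mr$, the iteration bound is
\[
        2d(d+1)\log(16LR/\eta)=2d(d+1)\log(16/\eps).
\]
As a consistency check, $R_s/\kappaC=\tanh s/\kappaC\to r$ and $L_s\kappaC=M\cosh^2 s\to M$. So the rescaled Klein problem converges to exactly this Euclidean one, and the limiting bound is the normalized Euclidean ellipsoid bound.
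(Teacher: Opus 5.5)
Your proposal is correct and takes the same route as the paper, which simply passes to the limit in the explicit formula for $N(d,s,\eps)$ using the small-$s$ expansion $\frac{\sinh s\cosh s}{s}=1+\frac23 s^2+O(s^4)$ and identifies the result with the Euclidean bound of \Cref{lem:ellipsoid-sublevel} for $L=M$, $R=r$, $\eta=\eps Mr$. Your ceiling caveat is right and sharper than the paper's statement: since the argument decreases strictly to $x=2d(d+1)\log(16/\eps)$, the limit of $N$ is $x+1$, not $\lceil x\rceil$, whenever $x\in\mathbb Z$. Your lead-in phrase ``right-continuous from above at every real point'' is wrong, however, because the ceiling is only left-continuous at integers, as your very next sentence concedes.
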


\section{Numerical experiments and visual diagnostics}\label{sec:numerics}

This section is a reproducible diagnostic layer for the preceding theorem.  It does not change the proof: the experiments use the same Klein-coordinate cuts, the same central-cut ellipsoid update, and the same normalized target scale
\[
        \frac{f(\hat x)-f^*}{Mr}\le \eps.
\]
All experiments use curvature parameter $\kappaC=1$ and closed feasible ball radius $r=s$, so $M=1$ and the target is $\eta=\eps s$.

\subsection{Visual check of one exact Klein cut}

\Cref{fig:klein-cut-geometry} illustrates the sign identity in \Cref{prop:lorentz-cut} for the distance objective $f(X)=\dist_{\HH}(X,X_*)$.  The lower sublevel set through the queried point is the hyperbolic ball centered at $X_*$ with radius $f(X_k)$; in the Klein disk its boundary is curved, but the Riemannian subgradient cut is an exact Euclidean halfspace through the query.  The picture is generated from the same formulas used by the numerical oracle.

\begin{figure}[t]
\centering
\includegraphics[width=0.72\linewidth]{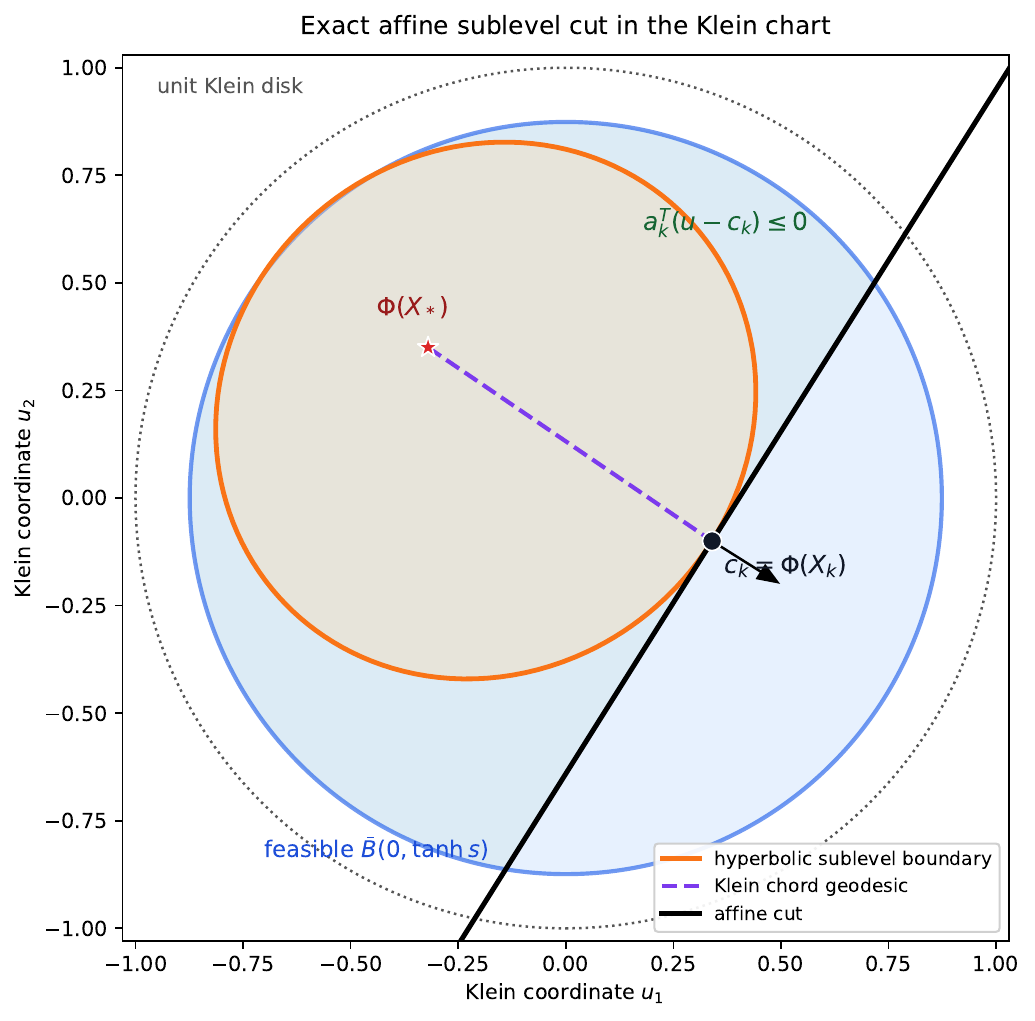}
\caption{A single Klein cutting plane.  The blue disk is the feasible Klein image $\bar B(0,\tanh s)$; the orange curve is a hyperbolic lower sublevel boundary for $f(X)=\dist_{\HH}(X,X_*)$; the black line is the exact Euclidean central cut $a_k^{\mathsf T}(u-c_k)=0$ obtained from the Lorentz spatial component of the Riemannian subgradient.  Although the pulled-back objective is generally only quasiconvex, the sublevel set lies on the correct side of an affine cut.}
\label{fig:klein-cut-geometry}
\end{figure}

\subsection{Complexity landscape}

\Cref{fig:complexity-landscape} visualizes the exact logarithmic factor
\[
        \ell(s,\eps)=\log\!\left(\frac{16\sinh s\cosh s}{s\eps}\right),
        \qquad
        N(d,s,\eps)=\lceil 2d(d+1)\ell(s,\eps)\rceil.
\]
The heat map separates the intrinsic dimension multiplier $2d(d+1)$ from the curvature--accuracy factor.  The line plot compares $\ell(s,\eps)$ with the large-radius expansion
\[
        \log(16/\eps)+2s-\log(4s),
\]
showing that the additive dependence $s+\log(1/\eps)$ is already visible at moderate radius.

\begin{figure}[t]
\centering
\includegraphics[width=0.98\linewidth]{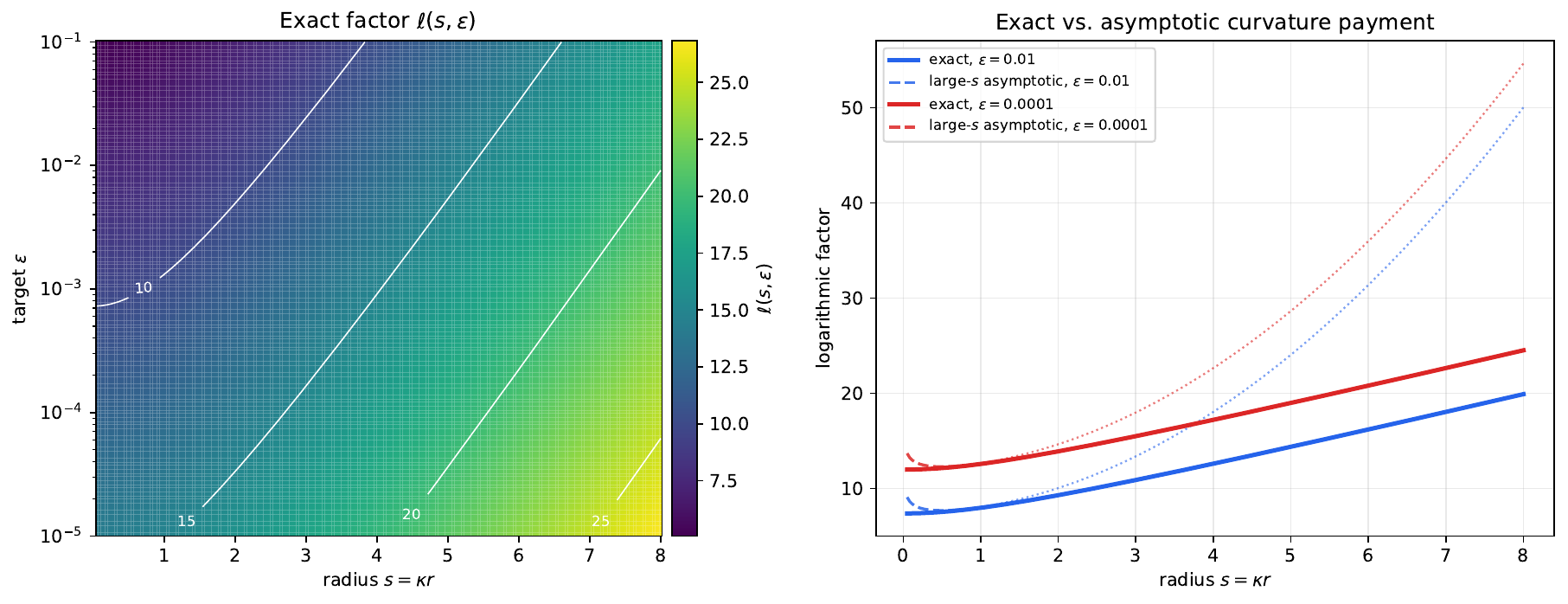}
\caption{Exact complexity factor for the one-shot Klein cutting-plane method.  Left: heat map of $\ell(s,\eps)$ over radius $s=\kappaC r$ and target accuracy $\eps$.  Right: exact curves and the large-radius asymptotic, confirming the additive payment $2s+\log(1/\eps)-\log(4s)$ rather than a chained local product in $s$ and $\log(1/\eps)$.}
\label{fig:complexity-landscape}
\end{figure}

\subsection{Reproducible nonsmooth test family}

For a controlled nonsmooth geodesically convex benchmark, choose a target minimizer $X_*$ inside the feasible ball and a Lorentz-orthonormal tangent frame $\{v_i\}_{i=1}^d\subset T_{X_*}\HH^d$.  For a fixed spread parameter $\tau>0$, define
\[
        Y_i^\pm=\cosh(\tau)X_*\pm\sinh(\tau)v_i,
        \qquad
        f(X)=\max_{1\le i\le d,\,\sigma\in\{+,-\}}
        \dist_{\HH}(X,Y_i^\sigma).
\]
This is the pointwise maximum of $2d$ geodesically convex $1$-Lipschitz distance functions, hence it is geodesically convex and $1$-Lipschitz.  Moreover $f^*=\tau$ at $X_*$: for every pair $Y_i^+,Y_i^-$, the triangle inequality gives
\[
        \max\{\dist(X,Y_i^+),\dist(X,Y_i^-)\}
        \ge \frac12\dist(Y_i^+,Y_i^-)=\tau,
\]
with equality at $X=X_*$.  At each query, the oracle returns the subgradient of an active distance term.

\begin{figure}[t]
\centering
\includegraphics[width=0.82\linewidth]{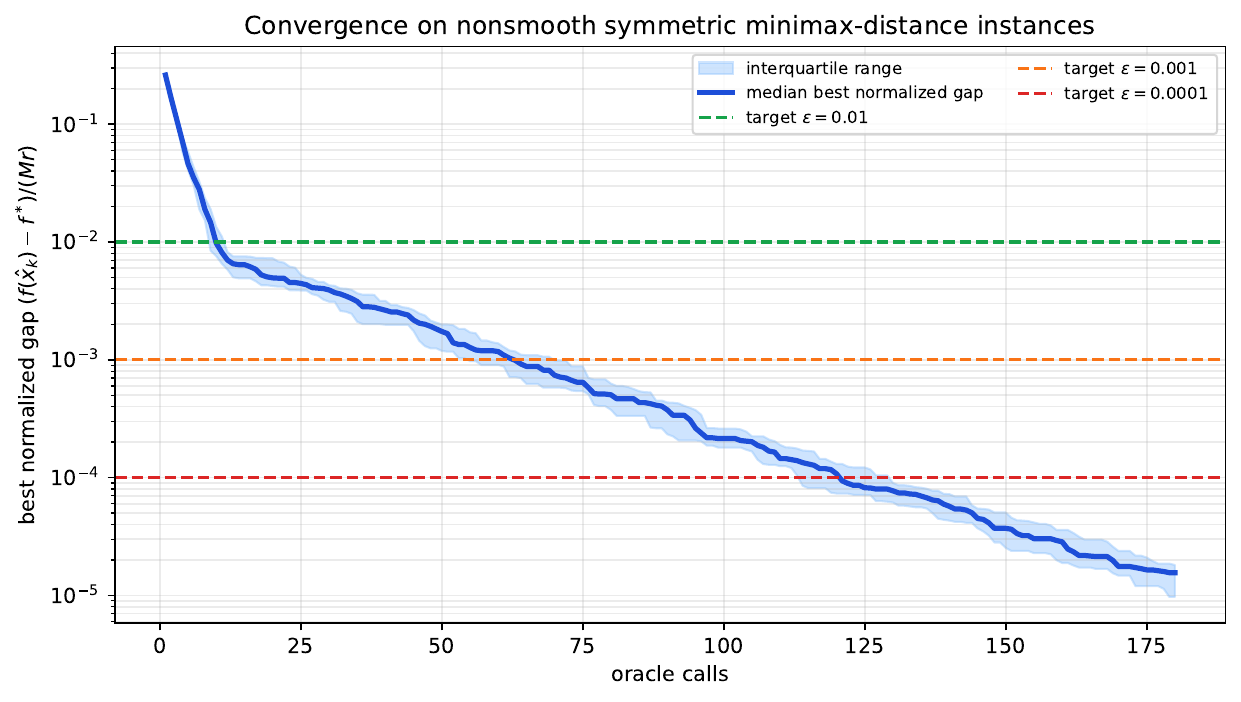}
\caption{Empirical convergence on the nonsmooth symmetric minimax-distance benchmark with $d=4$, $s=2$, $\tau=0.8$, and $30$ random target locations.  The solid curve is the median best normalized gap $(f(\hat x_k)-f^*)/(Mr)$; the band shows the interquartile range.  The dashed horizontal lines are representative target accuracies.}
\label{fig:experiment-convergence}
\end{figure}

\begin{figure}[t]
\centering
\includegraphics[width=0.98\linewidth]{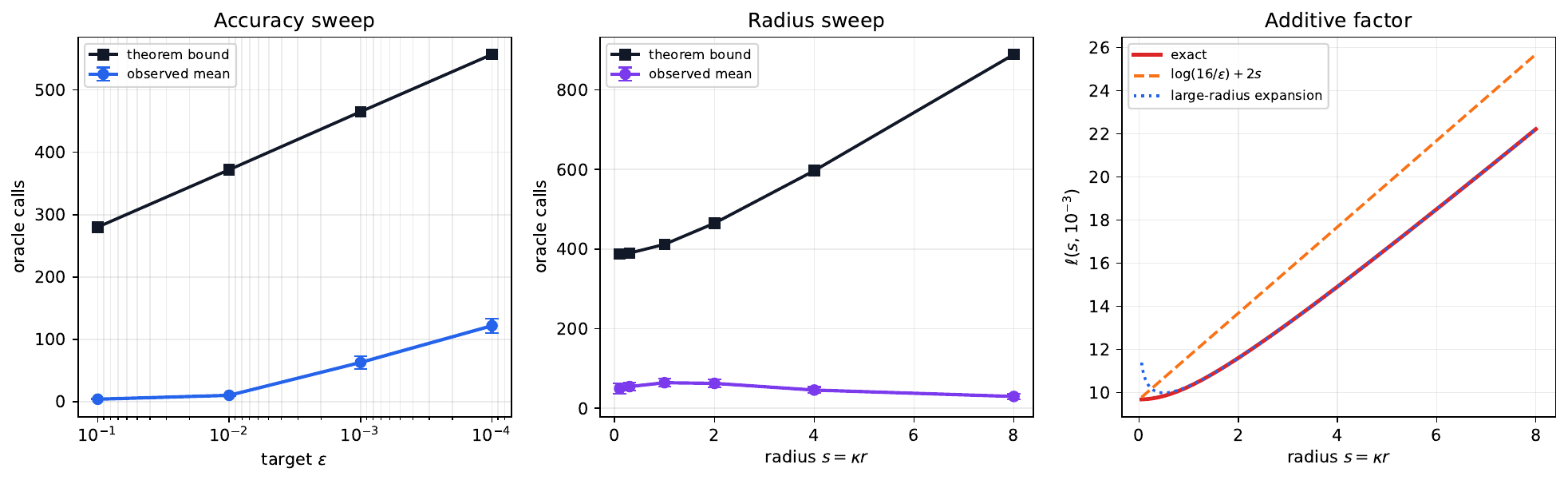}
\caption{Empirical scaling of oracle calls on the nonsmooth benchmark, averaged over $20$ random target locations.  The theorem gives a worst-case envelope; the observed counts stay below that envelope in all displayed runs.  The right panel plots the same additive curvature--accuracy factor that appears in the proof.}
\label{fig:experiment-scaling}
\end{figure}

\begin{table}[t]
\caption{Empirical query counts for the nonsmooth minimax-distance benchmark}
\label{tab:numerical-query-counts}
\centering
\small
\begin{tabular}{llrrrr}
\toprule
Sweep & Value & Mean queries & Std. dev. & Max queries & Theorem $N$ \\
\midrule
$d$ & $2$  & $26.0$  & $5.3$  & $32$  & $140$ \\
$d$ & $4$  & $62.8$  & $10.5$ & $80$  & $465$ \\
$d$ & $8$  & $75.0$  & $34.8$ & $129$ & $1671$ \\
$d$ & $16$ & $56.7$  & $5.9$  & $66$  & $6311$ \\
\midrule
$s$ & $0.1$ & $49.9$ & $13.0$ & $67$ & $388$ \\
$s$ & $0.3$ & $54.8$ & $10.7$ & $75$ & $390$ \\
$s$ & $1$   & $64.5$ & $10.1$ & $79$ & $412$ \\
$s$ & $2$   & $62.8$ & $10.5$ & $80$ & $465$ \\
$s$ & $4$   & $46.0$ & $7.7$  & $56$ & $597$ \\
$s$ & $8$   & $29.9$ & $7.4$  & $42$ & $889$ \\
\midrule
$\eps$ & $10^{-1}$ & $3.9$   & $0.4$  & $4$   & $280$ \\
$\eps$ & $10^{-2}$ & $10.1$  & $1.7$  & $13$  & $372$ \\
$\eps$ & $10^{-3}$ & $62.8$  & $10.5$ & $80$  & $465$ \\
$\eps$ & $10^{-4}$ & $121.7$ & $11.6$ & $141$ & $557$ \\
\bottomrule
\end{tabular}
\end{table}

The table uses $\tau=0.8$ and target location norm $\|\Phi(X_*)\|=0.55\tanh s$ unless the swept variable changes the radius.  For the dimension sweep, $s=2$ and $\eps=10^{-3}$; for the radius sweep, $d=4$ and $\eps=10^{-3}$; for the accuracy sweep, $d=4$ and $s=2$.  These instances are not meant to estimate worst-case constants.  They check three implementation-sensitive points: exact affine cuts in a fixed chart, stable $O(d^2)$ ellipsoid updates, and the predicted additive curvature--accuracy factor in the theorem.

\FloatBarrier
\section{Relation to the chained local construction}

A local construction uses charts of radius comparable to $1/\kappaC$.  In schematic form one takes
\[
        R_{\rm loc}=\frac1{\kappaC},
\]
solves local subproblems of the form
\[
        \min\{f(x):x\in\bar B_{\HH}(x_k,R_{\rm loc})\cap\bar B_{\HH}(x_0,r)\},
\]
and obtains progress resembling
\[
        f(x_{k+1})-f^*
        \le
        \left(1-c\frac{R_{\rm loc}}{r}\right)(f(x_k)-f^*)
        +O(\eps MR_{\rm loc}).
\]
Thus the number of local phases is
\[
        T=O\left(\frac r{R_{\rm loc}}\log\frac1\eps\right)
        =O\left(s\log\frac1\eps\right).
\]
Each phase costs
\[
        O\left(d^2\log\frac1\eps\right),
\]
so the chained dependence is
\[
        O\left(sd^2\log^2\frac1\eps\right)
\]
up to the usual replacement of $s$ by $\zeta_s=\Theta(1+s)$.

The one-shot method pays instead
\[
        \log\left(\frac{L_sR_s}{\eta}\right)
        =\log\left(\frac{\sinh s\cosh s}{s\eps}\right)
        =\log(1/\eps)+2s-\log(4s)+O(e^{-4s}).
\]
Thus the interaction between curvature radius and accuracy changes from
\[
        s\log^2(1/\eps)
\]
to
\[
        s+\log(1/\eps).
\]
Thus, within the hyperbolic model-space setting, the sharp additive estimate implies the coarser product-style estimate
\[
        d^2(s+\log(1/\eps))
        \le
        d^2(1+s)\log(e/\eps)
        \le
        2d^2\zeta_s\log(e/\eps).
\]
The word ``equivalently'' should not be used for the last inequality: it is a coarser corollary within the hyperbolic setting, not an equivalent expression for the sharp additive bound.

\section{Base-point-free implementation}

For an arbitrary center $x_0\in\HH^d_{-\kappaC^2}$, choose a Lorentz-orthonormal frame
\[
        E_0=x_0,
        \qquad
        \ipL{E_0}{E_0}=-1,
        \qquad
        \ipL{E_0}{E_i}=0,
        \qquad
        \ipL{E_i}{E_j}=\delta_{ij}.
\]
This is a Lorentz-orthonormal spatial frame, not a Riemannian-orthonormal one.  If $e_i$ is Riemannian-orthonormal, then
\[
        \ip{e_i}{e_j}_{x_0}=\delta_{ij}
        \quad\Longleftrightarrow\quad
        \ipL{e_i}{e_j}=\kappaC^2\delta_{ij},
\]
and the Lorentz-normalized vector is
\[
        E_i=\kappaC^{-1}e_i.
\]

The base-point-free Klein coordinate is
\[
        \Phi_{x_0}(X)
        =\frac{(\ipL{X}{E_1},\ldots,\ipL{X}{E_d})^{\mathsf T}}{-\ipL{X}{E_0}},
\]
with inverse
\[
        X_{x_0}(u)
        =\frac{E_0+\sum_{i=1}^du_iE_i}{\sqrt{1-\norm{u}^2}}.
\]
Then
\[
        \Phi_{x_0}(\bar B_{\HH}(x_0,r))=\bar B(0,\tanh(\kappaC r)).
\]
If the oracle returns $g\in T_X\HH^d_{-\kappaC^2}$ at
\[
        X=X_{x_0}(c),
\]
the Euclidean cut normal is
\[
        a(g)=\bigl(\ipL{g}{E_1},\ldots,\ipL{g}{E_d}\bigr)^{\mathsf T}.
\]
Indeed, for
\[
        Y=X_{x_0}(u),
\]
we have
\[
        \ipL{g}{Y}
        =\frac{\ipL{g}{E_0}+\sum_{i=1}^du_i\ipL{g}{E_i}}{\sqrt{1-\norm{u}^2}},
\]
while tangency at $X_{x_0}(c)$ gives
\[
        0=\ipL{g}{X_{x_0}(c)}
        =\frac{\ipL{g}{E_0}+\sum_{i=1}^dc_i\ipL{g}{E_i}}{\sqrt{1-\norm{c}^2}}.
\]
Subtracting yields
\[
        \ipL{g}{Y}
        =\frac{a(g)^{\mathsf T}(u-c)}{\sqrt{1-\norm{u}^2}},
\]
and hence
\[
        \ip{g}{\log_XY}_{X}\le0
        \quad\Longleftrightarrow\quad
        a(g)^{\mathsf T}(u-c)\le0.
\]

\section{Why the one-chart proof is hyperbolic}

The proof works because the Klein chart has two exact properties:
\[
        \text{geodesics}\mapsto\text{affine lines},
        \qquad
        \text{Riemannian subgradient halfspaces}\mapsto\text{affine halfspaces}.
\]
A chart with the second property is highly restrictive.

\begin{definition}[affine-cut chart]\label{def:affine-cut-chart}
Let $U\subset\calM$ be geodesically convex.  A smooth chart $\psi:U\to\Omega\subset\R^d$ is an affine-cut chart if, for every $x\in U$ and every tangent vector $g\in T_x\calM$, there are $a_{x,g}\in\R^d$ and $\beta_{x,g}\in\R$ such that
\[
        \{y\in U:\ip{g}{\log_x(y)}_x=0\}
        =\{y\in U:a_{x,g}^{\mathsf T}\psi(y)=\beta_{x,g}\}.
\]
\end{definition}

\begin{proposition}[affine cuts force projective geodesics]\label{prop:affine-cuts-projective}
Assume $\psi$ is an affine-cut chart and the displayed equality holds locally for all tangent vectors.  Then every sufficiently short geodesic segment in $U$ has image contained in a Euclidean line.
\end{proposition}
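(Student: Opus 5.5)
The plan is to show that an affine-cut chart turns the zero set of the linear form $v\mapsto\ip{g}{v}_x$, pushed through $\log_x^{-1}=\exp_x$, into a hyperplane in the $\psi$-chart. We then intersect enough of these hyperplanes to isolate a single geodesic. Throughout we work in a neighborhood of $x$ small enough that $\exp_x$ is a diffeomorphism onto a geodesically convex region inside $U$. There $\log_x$ is smooth and $y\mapsto\ip{g}{\log_x(y)}_x$ is a smooth function.

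Fix $x\in U$ and a unit vector $v\in T_x\calM$. Let $\gamma(t)=\exp_x(tv)$ for $|t|<\delta$. Choose $g_1,\ldots,g_{d-1}$ spanning $v^\perp$ in $T_x\calM$. For $y$ near $x$ the following are equivalent:
\[
\ip{g_i}{\log_x(y)}_x=0\ \text{for all } i
\iff
\log_x(y)\in\R v
\iff
y\in\gamma((-\delta,\delta)).
\]
By the affine-cut hypothesis, each condition $\ip{g_i}{\log_x(y)}_x=0$ is equivalent to $a_i^{\mathsf T}\psi(y)=\beta_i$, with $a_i=a_{x,g_i}$. Hence $\psi(\gamma)$ lies in the affine set
\[
\Lambda=\{w:a_i^{\mathsf T}w=\beta_i,\ i=1,\ldots,d-1\}.
\]

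The key step is to show that $\Lambda$ is a line, i.e.\ that the $a_i$ are linearly independent. Each $a_i$ is nonzero, since otherwise the level set would be empty or all of $U$. The level set cannot be empty because it contains $x$ itself. It cannot be all of $U$ because for $g\ne0$ the function $y\mapsto\ip{g}{\log_x(y)}_x$ is not identically zero near $x$. For independence, we use that the level set of $\ip{g_i}{\log_x(\cdot)}_x$ through $x$ is a smooth hypersurface there. Its tangent space at $x$ is $g_i^\perp$, because $d(\log_x)_x=\mathrm{id}$. Since this same set is the hyperplane $\{a_i^{\mathsf T}\psi=\beta_i\}$, we get
\[
d\psi_x(g_i^\perp)=a_i^\perp .
\]
As $d\psi_x$ is invertible, independence of the $g_i$, equivalently of the hyperplanes $g_i^\perp$, transfers to independence of the $a_i$. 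So $\Lambda$ is an affine line, and $\psi(\gamma)\subset\Lambda$.

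Finally, a short geodesic segment between two nearby points $x,y$ is of this form, namely $\gamma$ with $v=\log_x(y)/\norm{\log_x(y)}_x$, once $y$ lies in the chosen normal neighborhood. The main obstacle is the independence and nondegeneracy step. If the hypothesis is read only set-theoretically, one must ensure that the level set is genuinely the hyperplane rather than, say, a degenerate piece of it. We address this with the transversality argument: near $x$ the level set is a regular hypersurface, because the differential $\kappaC$-scaled $g^\flat$ of $y\mapsto\ip{g}{\log_x(y)}_x$ at $x$ is nonzero. The case $d=1$ is trivial, since every curve in a $1$-dimensional chart already lies in a line.
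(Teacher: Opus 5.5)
Your proposal is correct and follows essentially the same route as the paper. For $g\perp v$, $\psi(\gamma)$ lies in the hyperplanes $a_{x,g}^{\mathsf T}z=\beta_{x,g}$, and comparing tangent spaces at $x$ shows their common intersection is the affine line through $\psi(x)$ in direction $d\psi_x v$; your identity $d\psi_x(g_i^\perp)=a_i^\perp$ is the paper's ``differentiating the equality of hypersurfaces.'' Your regular-hypersurface justification of that step and your check that $a_i\neq0$ make explicit what the paper uses implicitly; the only slip is cosmetic, since on a general $\calM$ the differential at $x$ is simply $g^\flat$, with no $\kappaC$ scaling.
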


\begin{proof}
Fix
\[
        x\in U,
        \qquad
        v\in T_x\calM,
        \qquad
        v\ne0,
\]
and let
\[
        \gamma(t)=\exp_x(tv)
\]
for $|t|$ small.  For every $g\in T_x\calM$ with
\[
        \ip{g}{v}_x=0,
\]
we have the exact identity
\[
        \log_x(\gamma(t))=tv,
\]
so
\[
        \ip{g}{\log_x(\gamma(t))}_x=t\ip{g}{v}_x=0.
\]
Thus $\psi(\gamma(t))$ lies in every hyperplane
\[
        a_{x,g}^{\mathsf T}z=\beta_{x,g},
        \qquad
        g\in v^\perp.
\]
At $t=0$, differentiating the equality of hypersurfaces gives
\[
        a_{x,g}^{\mathsf T}D\psi_x(w)=0
        \quad\Longleftrightarrow\quad
        \ip{g}{w}_x=0.
\]
As $g$ ranges over $v^\perp$, the common tangent intersection is
\[
        \{\lambda D\psi_xv:\lambda\in\R\}.
\]
The common intersection of the corresponding affine hyperplanes is therefore a Euclidean affine line through $\psi(x)$ with direction $D\psi_xv$, and $\psi(\gamma(t))$ is contained in that line for $|t|$ small.
\end{proof}

Beltrami-type rigidity says that a Riemannian metric whose unparametrized geodesics are straight lines in local coordinates has constant sectional curvature under the usual nondegeneracy hypotheses.  This explains why ordinary ellipsoids suffice for the hyperbolic proof but do not automatically extend to arbitrary bounded-curvature manifolds.  A general solution would require a different localization family,
\[
        \calK(\theta)\subset\R^d,
\]
and a computable update for curved cuts,
\[
        \calK(\theta)\cap\{q_a(u)\le0\}\subseteq\calK(\theta^+),
        \qquad
        S(\theta^+)\le\rho S(\theta),
        \qquad
        \rho<1,
\]
with a size-to-accuracy theorem replacing the Euclidean volume argument.

\section{Conclusion}

In the constant negative-curvature setting motivated by the COLT open problem, hyperbolic space admits a one-shot deterministic first-order cutting-plane method.  In exact terms,
\[
        f(\hat x)-f^*
        \le\eps Mr,
\]
after
\[
        N(d,s,\eps)
        =\left\lceil2d(d+1)
        \log\left(\frac{16\sinh s\cosh s}{s\eps}\right)\right\rceil
\]
queries.  The core mechanism is
\[
        \ip{g}{\log_XY}_{X}
        =\frac{\theta}{\kappaC^2\sinh\theta}\ipL{g}{Y},
        \qquad
        \ipL{g}{Y}\text{ has the sign of }a_g^{\mathsf T}(\Phi(Y)-\Phi(X)).
\]
The exact distortion payment is
\[
        L_s=\frac{M\cosh^2s}{\kappaC},
        \qquad
        R_s=\tanh s,
        \qquad
        \eta=\eps M\frac{s}{\kappaC},
\]
so
\[
        \log\frac{L_sR_s}{\eta}
        =\log\left(\frac{\sinh s\cosh s}{s\eps}\right)
        =\log(1/\eps)+2s-\log(4s)+O(e^{-4s}).
\]
Thus the final complexity is
\[
        O\bigl(d^2(s+\log(1/\eps))\bigr),
\]
and, in the coarser product-style parameterization using
\[
        \zeta_s=\frac{s}{\tanh s},
\]
this implies
\[
        O\bigl(d^2\zeta_s\log(e/\eps)\bigr).
\]
The $d=1$ singularity of the central ellipsoid formula is removed by an interval branch; for $d\ge2$ the ellipsoid update is exactly the standard central-cut update.

We emphasize that this result does not settle the general Hadamard-manifold open problem.  The method relies on the Beltrami--Klein projective model of hyperbolic space, where geodesics become Euclidean chords and Riemannian subgradient halfspaces become affine Euclidean halfspaces.  These features are not available, for example, on the manifold of positive definite matrices with its usual nonpositively curved geometries.  Extending deterministic first-order cutting-plane localization to such spaces remains open.
\section*{Declaration of Generative AI and AI-Assisted Technologies in the Writing Process}
During the preparation of this work, the authors used DeepSeek for logical structuring and language polishing. After using this tool, the authors reviewed and edited the content as needed and take full responsibility for the content of the published article.
\appendix

\section{Derivation of the central-cut ellipsoid update}\label{app:ellipsoid-update}

Let
\[
        E(c,Q)=\{u:(u-c)^{\mathsf T}Q^{-1}(u-c)\le1\},
        \qquad
        Q\succ0,
        \qquad
        d\ge2.
\]
For the central cut
\[
        a^{\mathsf T}(u-c)\le0,
        \qquad
        a\ne0,
\]
apply the affine transformation
\[
        z=Q^{-1/2}(u-c).
\]
Then
\[
        E(c,Q)\mapsto\{z:\norm z\le1\},
\]
and the cut becomes
\[
        \tilde a^{\mathsf T}z\le0,
        \qquad
        \tilde a=\frac{Q^{1/2}a}{\sqrt{a^{\mathsf T}Qa}},
        \qquad
        \norm{\tilde a}=1.
\]
After rotation assume $\tilde a=e_1$.  The containing ellipsoid for the half-ball
\[
        \{z:\norm z\le1,\ e_1^{\mathsf T}z\le0\}
\]
has center and shape
\[
        \tilde c=-\frac1{d+1}e_1,
        \qquad
        \tilde Q=\frac{d^2}{d^2-1}\left(\Id-\frac{2}{d+1}e_1e_1^{\mathsf T}\right).
\]
Indeed, the eigenvalue of $\tilde Q$ in the $e_1$ direction is
\[
        \frac{d^2}{(d+1)^2},
\]
and its eigenvalue in each orthogonal direction is
\[
        \frac{d^2}{d^2-1}.
\]
Hence
\[
        \tilde Q^{-1}
        =
        \frac{(d+1)^2}{d^2}e_1e_1^{\mathsf T}
        +\frac{d^2-1}{d^2}\left(\Id-e_1e_1^{\mathsf T}\right).
\]
For $z_1\le0$ and $\norm z\le1$, we have $-1\le z_1\le0$ and
\begin{align*}
        (z-\tilde c)^{\mathsf T}\tilde Q^{-1}(z-\tilde c)
        &=\frac{(d+1)^2}{d^2}\left(z_1+\frac1{d+1}\right)^2
          +\frac{d^2-1}{d^2}\sum_{i=2}^dz_i^2                         \\
        &\le\frac{(d+1)^2}{d^2}\left(z_1+\frac1{d+1}\right)^2
          +\frac{d^2-1}{d^2}(1-z_1^2)                                  \\
        &=1+\frac{2(d+1)}{d^2}(z_1+z_1^2)                              \\
        &=1+\frac{2(d+1)}{d^2}z_1(1+z_1)                               \\
        &\le1.
\end{align*}
Thus
\[
        \{z:\norm z\le1,\ e_1^{\mathsf T}z\le0\}
        \subseteq
        \{z:(z-\tilde c)^{\mathsf T}\tilde Q^{-1}(z-\tilde c)\le1\}.
\]

The volume ratio is
\begin{align*}
        \frac{\vol(E^+)}{\vol(E)}
        &=\det(\tilde Q)^{1/2}                                                \\
        &=\left(\frac{d^2}{d^2-1}\right)^{d/2}
          \left(1-\frac2{d+1}\right)^{1/2}                                    \\
        &=\frac{d^d}{(d+1)(d^2-1)^{(d-1)/2}}.
\end{align*}
Taking logarithms,
\begin{align*}
        \log\frac{\vol(E^+)}{\vol(E)}
        &=d\log d-\log(d+1)-\frac{d-1}{2}\log(d^2-1)                         \\
        &=-\log\left(1+\frac1d\right)
          +\frac{d-1}{2}\log\left(1+\frac1{d^2-1}\right).
\end{align*}
The two elementary inequalities needed are
\[
        \log(1+t)\ge\frac{t}{1+t}\quad(t\ge0),
        \qquad
        \log(1+t)\le t\quad(t\ge0).
\]
Thus
\[
        \log\left(1+\frac1d\right)
        \ge\frac{1/d}{1+1/d}
        =\frac1{d+1},
\]
and
\[
        \frac{d-1}{2}\log\left(1+\frac1{d^2-1}\right)
        \le\frac{d-1}{2}\frac1{d^2-1}
        =\frac1{2(d+1)}.
\]
Therefore
\[
        \log\frac{\vol(E^+)}{\vol(E)}
        \le -\frac1{d+1}+\frac1{2(d+1)}
        =-\frac1{2(d+1)}.
\]
Transforming back gives
\[
        b=\frac{Qa}{\sqrt{a^{\mathsf T}Qa}},
        \qquad
        c^+=c-\frac1{d+1}b,
\]
\[
        Q^+=\frac{d^2}{d^2-1}\left(Q-\frac2{d+1}bb^{\mathsf T}\right).
\]
The singularity at $d=1$ is exactly why the one-dimensional case is handled by intervals instead of this formula.

\section*{Declaration of Generative AI and AI-Assisted Technologies in the Writing Process}
During the preparation of this work, the authors used DeepSeek to build a specialized agent for solving mathematical problems, which was employed to generate an initial proof of the main theorem. After using this tool, the authors reviewed and edited the content as needed and take full responsibility for the content of the published article.

\end{document}